\renewcommand{\Im}{\mathrm{Im}}
\renewcommand{\Re}{\mathrm{Re}}
\newcommand{\ie}{{\it i.e.}}
\definecolor{myblue}{rgb}{0,0,0.75}
\newtheorem{theorem}{Theorem}
\newtheorem{definition}{Definition}
\newtheorem{lemma}{Lemma}
\begin{document}

\title{\textit{g}-factor symmetry and topology in semiconductor band states}

\author{Mira Sharma}
\affiliation{%
 Institute for Quantum Information, RWTH Aachen University, D-52056 Aachen, Germany\\
}%

\author{David P. DiVincenzo}
\affiliation{
 Peter Gr{\"u}nberg Institute, Theoretical Nanoelectronics, Forschungszentrum J{\"u}lich, D-52425 J{\"u}lich, Germany}%

\date{\today}

\begin{abstract}
The $\bm{g}$ tensor, which determines the reaction of Kramers-degenerate states to an applied magnetic field, is of increasing importance in the current design of spin qubits. It is affected by details of heterostructure composition, disorder, and electric fields, but it inherits much of its structure from the effect of the spin-orbit interaction working at the crystal-lattice level. Here we uncover new symmetry and topological features of $\bm{g}=\bm{g}_L+\bm{g}_S$ for important valence and conduction bands in silicon, germanium, and gallium arsenide. For all crystals with high (cubic) symmetry, we show that large departures from the nonrelativistic value $g=2$ are {\em guaranteed} by symmetry. In particular, considering the spin part $\bm{g}_S(\bm{k})$, we prove that the scalar function $det(\bm{g}_S(\bm{k}))$ must go to zero on closed surfaces in the Brillouin zone, no matter how weak the spin-orbit coupling is. We also prove that for wave vectors $\bm{k}$ on these surfaces, the Bloch states $\ket{u_{n\bm{k}}}$ have maximal spin-orbital entanglement. Using tight-binding calculations, we observe that the surfaces $det(\bm{g}(\bm{k}))=0$ exhibit many interesting topological features, exhibiting Lifshitz critical points as understood in Fermi-surface theory. 
\end{abstract}

\maketitle


\section{\label{sec:level1}Introduction}
Semiconductor spin qubits hosted in quantum dots \cite{PhysRevA.57.120} are emerging as promising candidates for scalable quantum computing due to their long coherence times and potential integration into current semiconductor technology. In order to do coherent spin manipulation it is important to identify noise sources. There are two main sources of noise in spin qubits: charge noise and spin noise. Charge noise can result in spin dephasing due to spin-orbit interactions \cite{PRXQuantum.4.020305,Kuhlmann_2013} and its dependence on the $g$-factor. The role of the $g$-factor in the realization of scalable and robust information processing using spin qubits is paramount. Even in a device structure like a quantum dot, the {\em g}-factor, an intrinsic one-electron property that characterizes the magnetic moment of Kramers-degenerate electronic states, is largely determined by the underlying crystal physics, as explored in the present paper.

In fundamental particle physics, the electron $g$-factor, as found from Dirac's relativistic wave-equation, is a dimensionless magnetic moment which has an exact value of 2. Quantum electrodynamics modifies the Dirac $g$-factor, resulting in very small deviations from 2. In the solid state, relativistic interactions in many cases cause the $g$-factor value to differ substantially from 2. Band electrons are subject to a crystal potential $V(r)$ which by itself does not cause any changes of the $g$-factor. But the spin-orbit interaction, working together with the crystal potential, modifies the $g$-factor. Since these deviations depend on $V(r)$, they are crystal and band specific, unlike the deviations coming from quantum electrodynamics, which are much smaller and universal in nature. Our investigation in this paper provides a theoretical framework, using the tight-binding formalism, to describe the $g$-factor in the setting of bulk semiconductor materials.

The effective mass tensor is an important quantity for the determination of various properties in band theory. Luttinger provided a full description of the effects of magnetic field on a band state, remarking that the inverse effective mass is not isotropic, and not even symmetric \cite{PhysRev.102.1030}. He showed that the antisymmetric part of the inverse effective mass describes the orbital response of the band electrons to an external magnetic field, resulting in an orbital contribution $\bm{g}_L$, where we see that the $g$-factor is itself promoted to the form of a tensor (but not necessarily a symmetric one). This is distinct from the spin contribution $\bm{g}_S$ arising from the Zeeman interaction. Thus one has a formalism for computation of the full response $\bm{g}_{tot} = \bm{g}_L + \bm{g}_S$; we will compute and discuss interesting features of this quantity in this paper.

It is interesting to note that another analysis for $\bm{g}_L$, on the face of it different from Luttinger's theory, came to light much later in the context of Berry-phase theory \cite{RevModPhys.82.1959}. Here one considers a localized wavepacket composed of band states. It is noted that the circulating currents inside the wavepacket are a source of a magnetic moment, whose value is shown to be independent of the detailed shape of the wavepacket, and given by a formula involving the band Berry curvature. Below we provide a simple derivation showing that this formula is in fact equivalent to the Luttinger approach involving the antisymmetric effective mass.

One might expect that in materials where spin-orbit coupling is weak, with crystalline silicon as a possible example, we might expect a $g$-factor always close to 2, which would mean more specifically $\bm{g}_S$ having three eigenvalues close to 2, and $\bm{g}_L$ having very small eigenvalues. While this is true for certain bands in certain parts of the Brillouin zone, we find the actual situation more surprising. 

We find that symmetry and topology guarantees that values of $\bm{g}_S$ far from 2, and in fact reaching to zero, {\em must} occur for certain states. We show that for bands belonging to certain irreducible representations, there exists a surface in k-space for which $det(\bm{g}_S)=0$, and that these states must have rigorously maximal spin-orbital entanglement.

Thus, we find an interesting new phenomenology of the $g$-factor, which we explore for the semiconductors silicon (Si), germanium (Ge), and gallium arsenide (GaAs). Many bands have $det(\bm{g}_S)=0$ surfaces, as well as surfaces of $det(\bm{g}_{tot})=0$. Our tight-binding calculations show that these surfaces come in many different forms and topologies, reminiscent of Fermi surfaces. We note that Lifshitz critical points, an exotic feature in Fermiology, occur rather generically in our surfaces.

\section{Theory of the \texorpdfstring{$\bm{g}$}{TEXT}-tensor}\label{sec:gtheory}

There have been numerous studies on the determination of the $g$-factor both theoretically and experimentally. The $g$-factor serves as a quantity of interest for many applications such as device spectroscopy or g-Tensor Modulation Resonance. It is crucial in order to understand spin effects. An expression for the $g$-tensor has been derived using the effective mass approximation in the \textit{k $\cdot$ p} model \cite{PhysRev.118.1534,PhysRevLett.6.683,app9.1,Fraj2007EffectiveLF}. The total Hamiltonian of an electron in an external magnetic field is
\begin{align}
    H(\bm{B}) &= \frac{1}{2m}\left[ \bm{p}-\frac{|e|\bm{A}}{c}\right]^2 + V(\bm{r}) \notag \\
    &+ \frac{\hbar}{4c^2m^2}(\nabla V(\bm{r})\times \bm{p})\cdot \bm{\sigma}    + \frac{1}{2}g\mu_B\bm{\sigma}\cdot\bm{B},
\end{align}
where $\bm{A}$ is the vector potential such that $\bm{B} = \nabla \times \bm{A}$. We identify the spin-orbit Hamiltonian,
\begin{equation}
    H_{SO} = \frac{\hbar}{4c^2m^2}\left(\nabla \bm{V} \times \bm{p}\right)\cdot \bm{\sigma}=\frac{\alpha^2}{4}\left(\nabla \bm{V} \times \bm{p}\right)\cdot \bm{\sigma}.\label{eq:AH} 
\end{equation}
The last part is written in atomic Hartree units, introducing the fine structure constant $\alpha$. Note that the momentum operator in the presence of spin-orbit interaction is
\begin{equation}
    \bm{\pi} = \bm{p}+\frac{\hbar}{4mc^2}\bm{\sigma} \times \nabla V(\bm{r}).
\end{equation}
The external field changes the crystal momentum $\bm{k}$ to $\bm{k} - \frac{|e|}{\hbar c}\bm{A}$. The components of the crystal momentum do not commute. As first noted  by Luttinger \cite{PhysRev.102.1030},
the inverse effective mass is consequently not symmetric. The antisymmetric part
\begin{equation}
\left(\frac{1}{m^*}\right)_{ij}^{AS} = \frac{1}{2}\left[\left(\frac{1}{m^*}\right)_{ij}-\left(\frac{1}{m^*}\right)_{ji}\right]
\label{eq:AS}    
\end{equation} 
is proportional to $B$. The asymmetric term has, for band $n$, the operator form \cite{YAFET19631,cardona2005fundamentals}
\begin{align}
    \left(\frac{1}{m^*}\right)_{ij}^{AS} &= \frac{1}{m^2}\sum_{n'\neq n}   
    \frac{\pi_i\ket{u_{n'k}}\bra{u_{n'k}}\pi_j}
    {E_{n}-E_{n'}} \notag \\
    &- 
    \frac{1}{m^2}\sum_{n'\neq n}\frac{\pi_j\ket{u_{n'k}}\bra{u_{n'k}}\pi_i}
    {E_{n}-E_{n'}}.\label{lutt}
\end{align}
It is understood that this antihermitian operator acts within the space of the two (spin degenerate) crystal Bloch wave functions $\ket{u_{nk}}$ of band $n$. The Zeeman splitting of band state at energy $E_{nk}$ in magnetic field $\bf{B}$ is given by the eigenvalues of the $2\times 2$ spin Hamiltonian
\begin{equation}
    H_{nk} = E_{nk} + \frac{\hbar^2}{2} \epsilon_{ijl}\left(\frac{1}{m^*}\right)^{AS}_{ij}\left( \frac{i|e|B_l}{\hbar c}\right) + \mu_B\bm{\sigma}\cdot \bf{B}
    \label{eq:Hnk}
\end{equation}
(using the Einstein summation convention).
The splitting of the spin-degenerate eigenvalues around the zero-field value $E_{nk}$ has the form
\begin{equation}
    \Delta E_{nk}=\mu_B\sqrt{B_iG_{ij}B_j}.\label{eq:DE}
\end{equation}
This defines the symmetric tensor $G_{ij}$, from which the band $g$-tensor is obtained \footnote{Note that we use the convention of Abragam and Bleaney \cite{Abragam1970ElectronPR} in Eq.\ref{eq:Gg}; other, more recent work (e.g.,  \cite{PhysRevX.8.021068}) uses a transposed convention, i.e., $G=g^Tg$.}:
\begin{equation}
    \bm{G} = \bm{g}\cdot \bm{g}^T.
    \label{eq:Gg}
\end{equation}
Note that Eq.\ref{eq:Gg} does not uniquely define $\bm{g}$, leaving open the question of whether it should be considered symmetric or not. As we will see below, $\bm{g}$ can also be defined in terms of the magnetization of the band state, but then there is still non-uniqueness arising from the choice of basis of the pair of band states $\ket{u_{nk}}$. Some previous work has considered it natural to take the $g$-tensor to be non-symmetric, see  \cite{PhysRevX.8.021068,PhysRevLett.113.236401}. We will adopt a convention for which $\bm{g}$ is not necessarily symmetric; but our discussion will depend only on the singular values of $\bm{g}$, which are uniquely determined by definition Eq.\ref{eq:Gg}.\\
\\
Equation \ref{eq:Hnk} shows that $\bm{g}$ will have two distinct contributions, $\bm{g}=\bm{g}_L+\bm{g}_S$. If spin orbit is zero, only the second, pure spin term will contribute, in this case with the value 2. For finite spin orbit its contribution, which we denote $\bm{g}_S$, will in general be modified, and there will also be a nonzero orbital magnetization contribution $\bm{g}_L$ from the first term of Eq.\ref{eq:Hnk}. $\bm{g}_L$ is directly determined by the antisymmetric part of the inverse effective mass tensor. \\

\subsection{The spin \texorpdfstring{$\bm{g}$}{TEXT}-tensor, \texorpdfstring{$\bm{g}_S$}{TEXT}}

The Bloch states $\ket{u_{nk}}$ of band $n$ of the last section consist of two orthogonal partners, which we will denote as  eigenvectors $\ket{\xi}$ and $\ket{\bar{\xi}}$. In most cases the relations between these two states involves the time-reversal operation, \ie, a complex conjugation, $\kappa$, followed by a $\sigma_y$ Pauli rotation,
\begin{equation}
    \ket{\bar{\xi}} = i\sigma_y\kappa \ket{\xi}.
\end{equation}
The states $\ket{u_{nk}}$ are generally a pseudo-Kramers pair, as the time reversal actually relates $\ket{u_{nk}}$ and $\ket{u_{n,-k}}$. But $\ket{\xi}$ and $\ket{\bar{\xi}}$ will here always denote the pair of states at wavevector $\bm{k}$ that are spin-degenerate in the limit of zero spin-orbit interaction. ``Pseudo-Kramers" means, in many cases of interest, that these states are related by time reversal followed by space inversion.\\
\\
Introducing the spin matrix
\begin{equation}\label{eq:s_mat}
S_i = \frac{1}{2}
\begin{pmatrix}
  \expval{\xi|\sigma_i|\xi} & \expval{\xi|\sigma_i|\bar{\xi}} \\
  \expval{\bar{\xi}|\sigma_i|\xi} & \expval{\bar{\xi}|\sigma_i|\bar{\xi}}
\end{pmatrix},
\end{equation}
one gets a simple expression for the spin contribution to the $g$-tensor \cite{{Abragam1970ElectronPR}} 
\begin{equation}\label{eq:gs}
    g_{Sij} = \frac{1}{2} \Tr(2S_i\tilde\sigma_j).
\end{equation}
The appearance of 2$S_i$ in Eq.\ref{eq:gs} results in the correct spin contribution to the $g$-factor due to the magnetic moment formula $\bm{M} = \bm{L} + 2\bm{S}$. Here and below in Eq.\ref{eq:gm}, $\tilde\sigma_j$ denotes a Pauli-matrix operator acting on the space of the states $\ket{\xi}$, $\ket{\bar{\xi}}$, not to be confused with the operators $\sigma_j$ above, which operate on the electron-spin Hilbert space $\ket{\uparrow}$, $\ket{\downarrow}$.

\subsection{The orbital moment contribution to the \texorpdfstring{$\bm{g}$}{TEXT}-tensor, \texorpdfstring{$\bm{g}_L$}{TEXT}}

It is interesting to note that since the original work of Luttinger, a different point of view has emerged about the magnetic moment of band electrons. One considers the electron state as the coherent transport of a wave-packet moving in $\bm{k}$-space \cite{PhysRevB.72.085110,PhysRevB.53.7010,RevModPhys.82.1959}. Since the spread of the wave-packet is taken to be much smaller than the dimension of the Brillouin zone, it is affected only by the local band properties. Due to a rotation around its center of mass, this wavepacket has an orbital magnetic moment given by\footnote{The sign of Eq.\ref{eq:M} is the opposite of the sign found in the general literature. The sign of Eq.\ref{eq:mag_mm} corresponds to the atomic limit of the Land{\'e} g.}
\begin{equation}\label{eq:M}
    L(\bm{k}) = i\frac{e}{2\hbar}\expval{\nabla_{\bm{k}} u_n| \times \left[H(\bm{k})-\epsilon_n(\bm{k})\right]|\nabla_{\bm{k}}u_m},
\end{equation}
where we see the appearance of the Berry curvature $\ket{\nabla_{\bm{k}}u_m}$.\\
\\
This alternative formulation is completely equivalent to the Luttinger result Eq.\ref{eq:AS}, which we show by the following steps.
The gradient of the Bloch wave-function can equivalently be written using the position-Hamiltonian commutator \cite{thonhauser2011theory},
\begin{equation}
    \ket{\partial_{j}u_m} = -\frac{i}{\hbar}\sum_{l \neq m}\ket{u_l}\frac{\expval{{u_l|[x_j,H]|u_m}}}{E_l-E_m}.\label{thon}
\end{equation} 
Following from this, the orbital magnetic moment operator takes the form
\begin{equation}\label{eq:mm}
    \hat{L}_i(\bm{k}) = \frac{\!ie\,\,}{2\hbar^3}\!\sum_{\substack{j,k,\\ p\neq n,m}}\epsilon_{ijk}\frac{
    [x_j,H]|u_p\rangle\langle u_p|[x_k,H]}{\bar{E}_{nm}-E_p}.
\end{equation}
$L_i$ is taken to be an operator acting on the states $\ket{u_n}$ and $\ket{u_m}$; these are assumed to be degenerate if the spin-orbit interaction is zero, but in general may not be exactly degenerate with finite spin orbit interaction\cite{RevModPhys.82.1959}.
Here $\bar{E}_{nm} = \frac{E_n+E_m}{2}$.\\ 
\\
If bands $n$ and $m$ are degenerate, so that $\frac{E_n + E_m}{2} = E_n = E_m$, then indeed 
the asymmetric mass term is the magnetization operator; since $\frac{i\hbar}{m}\pi_j = [x_j,H]$,
\begin{equation}\label{eq:mag_mm}
    \hat{L}_i = \frac{\!\!-ie\,\,}{2\hbar m^2}\sum_{\substack{j,k,\\ p\neq n,m}}\epsilon_{ijk}\frac{\pi_j|u_p\rangle\langle u_p|\pi_k}{\bar{E}_{nm}-E_p} =\frac{-ie}{\hbar}\epsilon_{ijk}\left(\frac{1}{m^*}\right)^{AS}_{jk}.
\end{equation}
The orbital angular momentum matrix for states $\ket{\xi}$ and $\ket{\bar{\xi}}$ is
\begin{eqnarray}\label{eq:m_mat}
&L_i =& \\ & \sum\limits_{jkl}\frac{\epsilon_{ijk}}{\bar{E}_{\xi\bar{\xi}}-E_l} 
\begin{pmatrix}
    \expval{\xi|\pi_j|u_l} \expval{u_l|\pi_k|\xi} & \expval{\xi|\pi_j|u_l} \expval{u_l|\pi_k|\bar{\xi}} \\
    \expval{\bar{\xi}|\pi_j|u_l} \expval{u_l|\pi_k|\xi} &  \expval{\bar{\xi}|\pi_j|u_l} \expval{u_l|\pi_k|\bar{\xi}}
\end{pmatrix}.&\nonumber
\end{eqnarray}
Then, similarly to Eq.\ref{eq:gs},
\begin{equation}\label{eq:gm}
    g_{Lij} = \frac{1}{2} \Tr(L_i\tilde\sigma_j).
\end{equation}

\section{Topological and entanglement properties of the \texorpdfstring{$\bm{g}$}{TEXT} -tensor} \label{sec:properties}
\subsection{Linear algebra of the \textit{g}-tensor} \label{sssection:asym}

The eigenvalues of $\bm{G}$ (Eqs.\ref{eq:DE},\ref{eq:Gg}) are the square of the singular values of $\bm{g}$.
The singular value decomposition of $\bm{g}$ gives
\begin{equation}\label{eq:SVD}
    \bm{g} = U\:\Sigma\:V.
\end{equation}
Given that $\bm{g}(\bm{k})$ is real, $U$ and $V$ are two $\bm{k}$-dependent, orthogonal, real matrices, unique except for columnwise (for $U$) and rowwise (for $V$) sign changes. $\Sigma$, also $\bm{k}$-dependent, is a unique positive semi-definite diagonal matrix of the singular values of $\bm{g}$ (taken to be in descending order along the diagonal). $\tiny{ \Sigma}_{xx} \geq \tiny{\Sigma}_{yy} \geq \tiny{\Sigma}_{zz}$ are the three scalar quantities, commonly know as the $\bm{g}$-factors.\\
\\
The columns of $U$ give the principal axes for the external magnetic field $\bm{B}$, such that the Zeeman splittings of the energy levels are
\begin{equation}
    \Delta E = \mu_{\bm{B}}\sqrt{\tiny{\Sigma}^2_{xx}B^2_{x} + \tiny{\Sigma}^2_{yy}B^2_y + \tiny{\Sigma}^2_{zz}B^2_z}.
\end{equation}
The total magnetic moment vector of the lower energy eigenstate in magnetic field $\bm{B}$ is
\begin{equation}
M^{ground}_i=\mu_B^2\Sigma^2_{ii}B_i/2\Delta E.
\label{eq:mtot}
\end{equation}

\subsection{Zeros of the \texorpdfstring{$\bm{g}$}{TEXT}-factor, and relation to entanglement}\label{sssec:continuity}
The $g$-tensor has been derived using the eigenvectors of the Hamiltonian, $H(\bm{k})$. We note that $H(\bm{k})$ can be chosen to vary continuously with $\bm{k}$. Consequently, the eigenvectors can also be chosen to be continuous in $\bm{k}$.
Thus, from the aforementioned procedure of calculating the $g$-tensor, it must be a continuous quantity with respect to $\bm{k}$. $det(\bm{g}_S)$ must also be continuous.\\
\begin{table}[h!]
    \begin{center}
        \begin{tabular}{ |c|c|c| } 
             \hline
             $T$ & $\textcolor{blue}{\Gamma_4} \otimes D_{1/2}$ = $\textcolor{red}{\Gamma_5} \oplus \Gamma_6 \oplus \Gamma_7$  & $-$\\ 
             \hline
             $T_h$ & $\textcolor{blue}{\Gamma_4^+} \otimes D_{1/2}$ = $\textcolor{red}{\Gamma_5^+} \oplus \Gamma_6^+ \oplus \Gamma_7^+$ & $\forall \overrightarrow{OA}$\\
             & $\textcolor{blue}{\Gamma_4^-} \otimes D_{1/2}$ = $\textcolor{red}{\Gamma_5^-} \oplus \Gamma_6^- \oplus \Gamma_7^-$ & \\
             \hline
             $T_d$ & $\textcolor{blue}{\Gamma_4} \otimes D_{1/2} = \textcolor{red}{\Gamma_6} \oplus \Gamma_8$ & $\overrightarrow{OA}|\bm{A} \in \{(0,0,k_z),$\\
             &$\textcolor{blue}{\Gamma_5} \otimes D_{1/2} = \textcolor{red}{\Gamma_7} \oplus \Gamma_8$ & $(k_x,0,0), (0,k_y,0),$\\
             & & $(k_x, k_y, k_z)\}$
             \\
             \hline
             $O$ & $\textcolor{blue}{\Gamma_4} \otimes D_{1/2} = \textcolor{red}{\Gamma_6} \oplus \Gamma_8$ &\\ &$\textcolor{blue}{\Gamma_5} \otimes D_{1/2} = \textcolor{red}{\Gamma_7} \oplus \Gamma_8$ & $-$\\
             \hline
             $O_h$ & $\textcolor{blue}{\Gamma_4^+} \otimes D_{1/2} = \textcolor{red}{\Gamma_6^+} \oplus \Gamma_8^+$ & $ \forall \overrightarrow{OA}$ \\ 
             & $\textcolor{blue}{\Gamma_5^+} \otimes D_{1/2} = \textcolor{red}{\Gamma_7^+} \oplus \Gamma_8^+ $ & \\
             & $\textcolor{blue}{\Gamma_4^-} \otimes D_{1/2} = \textcolor{red}{\Gamma_6^-} \oplus \Gamma_8^-$ & \\ 
             & $\textcolor{blue}{\Gamma_5^-} \otimes D_{1/2} =\textcolor{red}{\Gamma_7^-} \oplus \Gamma_8^-$ & \\
             \hline
        \end{tabular}
    \caption{
    Band labels for which $det(\bm{g}_S)$ is guaranteed to pass through 0 along the ray $\overrightarrow{OA}$. Groups of bands are labeled according the irreducible representations of the states in the set of bands at ${\bf k}=0$ as given in \cite{altmann1994point},\cite{PhysRev96280}.  The left-most column of this table indicates the points groups. In the central columns, the bands denoted in blue are bands without spin-orbit interactions. These bands split into the red and black bands when spin-orbit couplings are taken into account. The red bands are the bands with Kramers degeneracy at $\bm{\Gamma}$ for which $det(\bm{g}_S)$ changes its sign along any ray $\overrightarrow{OA}$. The right-most column of the table contains the directions in which a maximal spin-orbit entanglement is guaranteed for the degenerate red bands of the corresponding symmetry group.}
    \label{table:1}
    \end{center}
\end{table}
For the proof of the upcoming theorem.\ref{thm:thm1}, we will require energy-band pairs $\mathcal{B}(\bm{k}), \bar{\mathcal{B}}(\bm{k})$ for which accidental degeneracy is absent. We present the definition of the non-degeneracy of the bands, which immediately permits the statement of the theorem.\\
To introduce a notation for this, we recall (Eq.\ref{eq:AH}) the expression for $H_{SO}$ in Hartree atomic units:
\begin{equation}
    H_{SO} = \frac{\alpha^2}{4}\left(\nabla \bm{V} \times \bm{p}\right)\cdot \bm{\sigma}.\nonumber
\end{equation}
$\alpha$ is the fine structure constant, but for the sake of Definition \ref{def:ND} and Theorem \ref{thm:thm1}, we treat it as a running constant.\\
\\
Consider the bands of Table.\ref{table:1}. At $\bm{k} = 0$, the 
bands that we will focus on are labeled, for spin-orbit parameter $\alpha=0$, according to the irreducible representations in blue in the second column of Table.\ref{table:1}. For spin-orbit parameter $\alpha>0$, these bands spilt into bands with the (double group) irreducible representations in red and black. The bands with the red label are doubly degenerate whereas the ones in black may have a different degeneracy.\\
\\
We will require that there be no accidental degeneracies involving the band pair $\mathcal{B}(\bm{k}), \bar{\mathcal{B}}(\bm{k})$ dispersing from the states belonging to irrep. \textcolor{red}{$\Gamma_i$} (red irreps.~from Table.\ref{table:1}). This pair is degenerate at $\bm{k} = 0$, and sometimes degererate elsewhere (for some point groups and some directions in $\bm{k}$-space). Furthermore, this pair of bands is degenerate as $\alpha\rightarrow 0$ with the other bands $\mathcal{B}_{mult}(\bm{k})$ associated with the accompanying irrep $\Gamma_j$ in the table, i.e., in the pair \textcolor{red}{$\Gamma_i$}$ \oplus \Gamma_j$. Besides these, we will need to require that no other degeneracies occur, as captured by this definition ($\alpha_*>0$ is some fixed spin-orbit strength):
\\
\begin{definition}\label{def:ND} Non-degeneracy of Bands.\\
Band pair $\mathcal{B}(\bm{k}), \bar{\mathcal{B}}(\bm{k})$ is non-degenerate if $\forall \alpha$ such that $0 <\alpha < \alpha_*$, $\exists k_{ND}$ and $\Delta E > 0$ and all three conditions are true: 
1) $\forall \bm{k}, |\bm{k}| \,\leq\, k_{ND}$, $\forall \mathcal{B}'(\bm{k}) \notin \{\mathcal{B}(\bm{k}), \bar{\mathcal{B}}(\bm{k}), \mathcal{B}_{mult}(\bm{k})\}$, $\forall \mathcal{B}''(\bm{k}) \in \{\mathcal{B}(\bm{k}), \bar{\mathcal{B}}(\bm{k}), \mathcal{B}_{mult}(\bm{k})\}$, $|\mathcal{B}'(\bm{k}) - \mathcal{B}''(\bm{k})| > \Delta E$. 
2) $\forall \bm{k}, |\bm{k}| \,\leq\, k_{ND}$, $\mathcal{B}(\bm{k})$ and $\bar{\mathcal{B}}(\bm{k})$ are nowhere degenerate with $\mathcal{B}_{mult}(\bm{k})$.
3) $\forall \bm{k}, |\bm{k}|\!\!=\!\! k_{ND}$, $|\mathcal{B}(\bm{k}) - \mathcal{B}_{mult}(\bm{k})| > \Delta E$ and $|\bar{\mathcal{B}}(\bm{k}) - \mathcal{B}_{mult}(\bm{k})| > \Delta E$.
\label{def1}
\end{definition}
For any band-pair in a particular crystal the conditions of our definition are easily checked, and are seen clearly to be true for all the materials studied later in the paper. We can now proceed directly with the first main result:

\begin{theorem}\label{thm:thm1}
Given a crystal with any of the cubic point group symmetries: $O$, $O_h$, $T$, $T_h$ or $T_d$. Consider any pair of energy bands belonging to the irreducible representations of the symmetry group of the crystal, indicated in red from Table.\ref{table:1}, which have come from states with the blue irreducible representation for zero spin-orbit coupling. We consider band pairs that are nondegenerate in the sense of Definition \ref{def1}. Then for sufficiently small spin-orbit coupling, there exists a surface in $k$-space, enclosing the $\Gamma$ point, on which $det(\bm{g}_S(k)) = 0$, 
where $\bm{g}_S(k)$ is the spin contribution to the $g$-tensor.
\end{theorem}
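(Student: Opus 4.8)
The plan is to exhibit a sign change of the continuous scalar $\det(\bm g_S(\bm k))$ between $\bm k=\bm 0$ and wave vectors a small distance away, and then turn the resulting zeros of $\det(\bm g_S)$ into a closed surface around $\Gamma$ by a continuity/connectedness argument. First I would fix $\bm g_S$ at the zone center. For any $\alpha>0$ the pair $\mathcal B(\bm 0),\bar{\mathcal B}(\bm 0)$ carries exactly the two-dimensional red irrep $\Gamma_i$, which occurs with multiplicity one inside $(\text{blue})\otimes D_{1/2}$, so the embedding --- and hence the matrix $\bm g_S(\bm 0)$ obtained from Eq.~\eqref{eq:gs} --- is fixed by Clebsch--Gordan data alone and is independent of $\alpha$. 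Because $\Gamma_i$ is absolutely irreducible and both the spin triple $\{S_x,S_y,S_z\}$ and the pseudospin triple $\{\tilde\sigma_x,\tilde\sigma_y,\tilde\sigma_z\}$ transform as the (axial-)vector representation of the cubic point group, Schur's lemma forces $g_{S,ij}(\bm 0)=g_0\,\delta_{ij}$ for a single scalar $g_0$. Evaluating $g_0$ case by case --- each red irrep in Table~\ref{table:1} is the ``$j=1/2$-like'' partner of an $\ell=1$-like orbital triplet, with an effective orbital momentum $\tilde{\bm L}=-\bm L$ for the $\Gamma_5$-type triplets that does not change the spin content --- one finds $g_0<0$ throughout (concretely $2S_z=-\tfrac13\,\tilde\sigma_z$ within the doublet, i.e.\ $g_0$ is $-\tfrac13$ times the nonrelativistic value). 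Hence $\det(\bm g_S(\bm 0))=g_0^{\,3}<0$ strictly, for every $\alpha>0$.

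Next I would show $\det(\bm g_S)>0$ somewhere away from $\Gamma$ once $\alpha$ is small. Restricted to the six-band complex, split $H(\bm k)$ into its spin-independent part $H_0(\bm k)$, which is $\alpha$-independent and vanishes at $\bm k=\bm 0$, and the spin--orbit part, which is $O(\alpha^2)$. The orbital dispersion $H_0(\bm k)$ splits the triplet on an $\alpha$-independent scale, so for $|\bm k|$ beyond a crossover $\sim\alpha$ it dominates the spin--orbit term; there the eigenstates in the complex are orbital eigenvectors of $H_0(\bm k)$ tensored with a spin doublet, for which $\bm S$ reduces to the bare electron spin and $\bm g_S$ approaches the isotropic nonrelativistic value with $\det(\bm g_S)>0$. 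Following the band pair continued from $\Gamma_i$ into this range --- through an avoided crossing with the ``black'' bands $\mathcal B_{mult}$, at which its character turns from $j=1/2$-like into orbital$\,\otimes\,$spin-like --- then forces $\det(\bm g_S)$ to change sign along the way. The non-degeneracy hypotheses of Definition~\ref{def1} are exactly what make ``the band pair'' an unambiguous continuous object through such crossings, and ``sufficiently small spin--orbit coupling'' is the requirement that the scales $\alpha_\ast$, $k_{ND}$ and the crossover radius interlock as $\alpha\to0$.

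To finish, $H(\bm k)$, the projector onto the (isolated) band pair, and therefore $\bm g_S(\bm k)$ and $\det(\bm g_S(\bm k))$ are continuous in $\bm k$ (Sec.~\ref{sssec:continuity}). With $\det(\bm g_S)<0$ at $\bm 0$ and $\det(\bm g_S)>0$ on a small sphere about $\bm 0$ (or at least on a dense set of directions on it), the connected component of $\{\det(\bm g_S)<0\}$ containing $\bm 0$ is a bounded open set whose boundary lies in $\{\det(\bm g_S)=0\}$ and is met by every ray issuing from $\Gamma$; that boundary is the asserted closed surface enclosing the $\Gamma$ point.

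\emph{Main obstacle.} I expect the second step to be the hard one: controlling the continuation of the red band pair out to where the orbital splitting dominates and showing that its $\bm g_S$ genuinely reaches the positive-determinant region rather than merely asymptoting to a vanishing determinant. The subtlety is sharpest along the high-symmetry rays of Table~\ref{table:1}, where $H_0(\bm k)$ leaves part of the triplet degenerate and degenerate perturbation theory in the spin--orbit term decides the outcome; there $\bm g_S$ tends to acquire exact zeros among its singular values for symmetry reasons, so $\det(\bm g_S)$ reaches $0$ by a more protected mechanism, and stitching the generic and high-symmetry behaviors into a single ``surface enclosing $\Gamma$'' --- while tracking how $\alpha$, $k_{ND}(\alpha)$ and the crossover radius nest as $\alpha\to0$ --- is where the real work lies.
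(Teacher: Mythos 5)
Your proposal follows essentially the same route as the paper's proof: show $\det(\bm{g}_S(\bm{0}))<0$ from the $j=1/2$-like (Land\'e $-2/3$) structure of the red doublet, show $\det(\bm{g}_S)$ is close to $+8$ at a finite radius where spin--orbit acts as a weak perturbation on the zero-$\alpha$ eigenvectors, and use continuity of $\det(\bm{g}_S(\bm{k}))$ to force a closed zero surface enclosing $\Gamma$. Two corrections to how you execute it. First, your claim that $\bm{g}_S(\bm{0})$ is fixed by Clebsch--Gordan data alone and is ``independent of $\alpha$'' is too strong: at finite $\alpha$ the red doublet mixes with other doublets of the \emph{same} double-group irrep coming from different non-relativistic states (e.g.\ the $s$-like $\Gamma_6$ from $\Gamma_1\otimes D_{1/2}$ in $T_d$), so $\bm{g}_S(\bm{0})$ is only $\mathrm{diag}(2/3,2/3,-2/3)$ up to perturbative corrections; the paper bounds these by $O(\alpha/\Delta E)$ using condition 1 of Definition~\ref{def1}, which suffices because the theorem only claims small $\alpha$ --- your ``$\det<0$ strictly for every $\alpha>0$'' is neither justified nor needed. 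Second, the step you flag as the main obstacle dissolves in the paper's formulation: there is no need to follow the band pair through avoided crossings with $\mathcal{B}_{mult}$ or to control $\bm{g}_S$ in the intermediate region, and no stitching of generic and high-symmetry behavior is required. One simply evaluates $\bm{g}_S$ on the fixed sphere $|\bm{k}|=k_{ND}$, where conditions 1 and 3 of Definition~\ref{def1} guarantee all energy denominators exceed $\Delta E$, so nondegenerate perturbation theory in $\alpha$ (Eq.~\ref{eq:pert_vec}) gives eigenvectors arbitrarily close to the zero-spin-orbit ones and hence $\det(\bm{g}_S)\to+8$ uniformly on that sphere, including along high-symmetry rays; the sign change between $\bm{k}=0$ and $|\bm{k}|=k_{ND}$ plus continuity then yields the enclosing surface, exactly as in your closing paragraph.
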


\begin{proof}
Consider the bands occurring in the band structure of crystals with the point groups $O,\: O_h,\: T, \:T_h,\: T_d$, that belong, at ${\bf k}=0$, to the irreducible representations in red in Table.\ref{table:1}. The condition that the spin-orbit interaction is nonzero implies that these states have only a two-fold degeneracy, and not higher (accidental degeneracies are excluded by assumption). This two-fold degeneracy is precisely the Kramers degeneracy. 
We denote these Kramers-degenerate pairs as $\ket{\xi_{\Gamma}}, \ket{\bar{\xi}_{\Gamma}}$ where $\Gamma$ is any of the red labels of Table.\ref{table:1}.\\
\\
The spin $g$-tensor of these doubly degenerate bands are calculated using Eqs.\ref{eq:s_mat},\ref{eq:gs}. At ${\bf k}=0$ and infinitesimal spin-orbit coupling $\alpha=0^+$
the eigenvectors of these bands belonging to irrep $\textcolor{red}{\Gamma_i}$ are built up from so-called \cite{cardona2005fundamentals} ``$p$-like states" $\ket{X}, \ket{Y}, \ket{Z}$, mutually orthogonal, normalized orbital states that are often predominantly composed of atomic p-orbital states. For all our red irreducible representations $\Gamma$, symmetry guarantees (cf.~\cite{cardona2005fundamentals}, and spin harmonics tabulated in \cite{altmann1994point}) that these eigenstates have the form
\begin{align}
    \ket{\xi_{\Gamma}} &= \ket{\frac{1}{2}, \frac{1}{2}} = \frac{1}{\sqrt{3}} \left(\ket{Z}\ket{\uparrow} + \ket{X}\ket{\downarrow} + i\ket{Y}\ket{\downarrow}\right), \notag\\
    \ket{\bar{\xi}_{\Gamma}} &= \ket{\frac{1}{2}, -\frac{1}{2}} = \frac{1}{\sqrt{3}} \left(\ket{Z}\ket{\downarrow} - \ket{X}\ket{\uparrow} + i\ket{Y}\ket{\uparrow}\right).\label{eq:SPlike}
\end{align}
For finite $\alpha$ all states of irrep $\Gamma_i$ are mixed. Some are not of the form \eqref{eq:SPlike} because they arise from different non-relativistic states, e.g.. for group $T_d$, one has $\textcolor{blue}{\Gamma_1} \otimes D_{1/2} = \textcolor{red}{\Gamma_6}$. For $\alpha=0^+$ these additional states always have the simple form $\ket{S}\ket{\uparrow}$, $\ket{S}\ket{\downarrow}$. Because of the mixing of these states, the eigenstates will have the more general form
\begin{align}
    \ket{\xi_{\Gamma}} &= \ket{\frac{1}{2}, \frac{1}{2}}+\ket{\delta\xi_{\Gamma}},\notag\\
    \ket{\bar{\xi}_{\Gamma}} &= \ket{\frac{1}{2}, -\frac{1}{2}}+ \ket{\delta\bar{\xi}_{\Gamma}}. \label{eq:SPlikeplus}
\end{align}
But because of the non-degeneracy condition, these corrections $\ket{\delta\xi}$ can be calculated with perturbation theory\footnote{We take as given the convergence of this perturbation, which is essentially the same as the issue of the convergence of the $k\cdot p$ perturbation calculations as in Eqs.~(\ref{lutt}) and (\ref{thon}), which is well established.
}. All energy denominators $E_{n'}^{(0)}-E_n^{(0)}$ are guaranteed by Definition \ref{def:ND} to have a magnitude no less than a finite constant $\Delta E$. We conclude from this that norm$(\ket{\delta\xi})=O(\alpha/\Delta E)$. With this, an explicit calculation (Eqs.\ref{eq:s_mat},\ref{eq:gs}) for any pair of states of the form \eqref{eq:SPlikeplus} gives
\begin{equation}\label{eq:gs_explapp}
    \bm{g}_S =
    \begin{pmatrix*}[r]
        \frac{2}{3} & \,\,0 & 0 \\
        0 & \,\,\frac{2}{3} & 0 \\
        0 & \,\,0 & -\frac{2}{3}
    \end{pmatrix*}+O(\alpha/\Delta E),
\end{equation}
Thus $det(\bm{g_S(k=0)})=-\frac{8}{27}+O(\alpha/\Delta E),$
so that $det(\bm{g_S(k=0)})<0$ for sufficiently small $\alpha$.

It is interesting to note that $g_S=-2/3$ coincides with the application of the Land\'e expression with quantum numbers  $L=1, S=1/2, J=1/2$ \cite{ashcroft1976solid} \footnote{The Lande expression for $\bm{g_S}$ can be seen from eq.31.37 of \cite{ashcroft1976solid}, p. 654, which shows the two contributions to the $g$-tensor in the atomic limit. It is interesting to note that while it is common to say that these band states have quantum numbers $\bm{L} = 1, \bm{S} = \frac{1}{2}, \bm{J} = \frac{1}{2}$, the symmetries that this statement implies are only partially in force. In particular, while symmetry dictates that $g_S$ is indeed exactly the Lande value $-2/3$ for $\alpha\rightarrow 0$, }$g_L$ is not fixed to its Lande value $+4/3$.

We now consider the calculation of $g_S({\bf k})$ for $|{\bf k}|=k_{ND}$ (see Definition \ref{def:ND}). To evaluate Eq.\ref{eq:s_mat}, we need the Bloch eigenvectors of bands $\mathcal{B}(\bm{k}), \bar{\mathcal{B}}(\bm{k})$; we will denote this pair as $\ket{u_n(k)}$ and $\ket{u_m(k)}$. We again can apply perturbation theory:
\begin{align}\label{eq:pert_vec}
    \ket{u_{n/m}(k)} &= \ket{u_{n/m}^{(0)}(k)} - \frac{\alpha^2}{4} \notag\\
    &\!\!\!\!\!\!\!\!\!\!\!\!\!\!\!\!\!\!\sum_{n'\neq n, m} \frac{\expval{u_{n'}^{(0)}(k)|\left(\nabla \bm{V} \times \bm{p}\right)\cdot \bm{\sigma}|u_{n/m}^{(0)}(k)}}{E_{n'}^{(0)}-E_{n/m}^{(0)}} \ket{u_{n'}^{(0)}(k)} \notag\\
    &+ \mathcal{O}\left(\frac{\alpha^4}{16\Delta E^2}\right). 
\end{align}
And again, we use the fact that all energy denominators are by assumption greater than  $\Delta E$. Thus, the states $\ket{u_{n/m}(k)}$, for sufficiently small $\alpha$, are arbitrarily close to the eigenvectors with no spin orbit coupling $\ket{u_{n/m}^{(0)}(k)}$. For these the three eigenvalues of $g_S$ are $+2$, and thus $det(\bm{g}_S(k_{ND}))=+8$. Since $det(\bm{g}_S)$ is continuous in ${\bf k}$ and changes sign between the origin and the sphere of radius $k_{ND}$, there must be a closed surface within this sphere on which $det(\bm{g}_S(k_{ND}))=0$.
\end{proof}

Note that the $g$ factors describe an isotropic quantity at ${\bf k} = (0, 0, 0)$. Equation \ref{eq:gs_explapp} does not contradict this, as changing the phase convention of one state, $\ket{\bar{\xi}_{\Gamma}} \rightarrow - \ket{\bar{\xi}_{\Gamma}}$, gives, for $\alpha=0^+$,
\begin{equation}
    \bm{g}_S =
    \begin{pmatrix*}[r]
        -\frac{2}{3} & 0 & 0 \\
        0 & -\frac{2}{3} & 0 \\
        0 & 0 & -\frac{2}{3}
    \end{pmatrix*}
\end{equation}\\
\\
whose determinant is also $-\frac{8}{27}$, and whose magnetic moment $\bm{M}$ is also identical, as seen from Eq.\ref{eq:mtot}. A similar change of sign convention causes the eigenvalues of $\bm{g}_S$  at $k_{ND}$ to be $(-2, -2, 2)$, but with $det(\bm{g}_S) = +8$ unchanged.\\
A final word about Theorem.\ref{thm:thm1}: on the face of it, the theorem seems to contradict the seemingly obvious assertion that $g_S=2$ if there is no spin-orbit coupling, which would mean that $det(\bm{g_S})=+8$. But we see that the actual statement is more subtle: as $\alpha\rightarrow 0$, $det(\bm{g_S({\bf k})})=+8$ for almost all ${\bf k}$. ${\bf k}=0$ is special because of the emergent 6-fold symmetry of the multiplets when $\alpha=0$. Thus, the order of limits $\alpha\rightarrow 0$, ${\bf k}\rightarrow 0$ matters, and it is permitted that a surface $det(\bm{g_S({\bf k})})=0$ always exists, but this surface must collapse onto the origin as $\alpha \rightarrow 0$. 
\\
We noticed that, along the directions $\overrightarrow{OA}$ shown in Table.\ref{table:1}, states in these bands exhibit maximum entanglement at the point where $det(\bm{g}_S) = 0$. We will now prove this interesting fact. For this, we first show the general relation between the reduced density matrices of the electron spin, $\rho_S$ and $\bar{\rho}_S$ of the states $\ket{\xi}$ and $\ket{\bar{\xi}}$ respectively.

\begin{lemma}\label{lm:lm1}
    Consider the reduced density matrices, $\rho_S$ and $\bar{\rho}_S$ of the Kramers-paired bands corresponding to the irreducible representations in red in Table.\ref{table:1} along the directions $\overrightarrow{OA}$. Then,
    \begin{equation}\label{eq:rsrts}
        \bar{\rho}_S = \sigma_y\rho_S^T\sigma_y.
    \end{equation}
\end{lemma}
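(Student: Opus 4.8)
The plan is to derive Lemma~\ref{lm:lm1} from one structural input followed by a short linear-algebra computation. The structural input is: along each ray $\overrightarrow{OA}$ listed in Table~\ref{table:1}, $H(\bm k)$ admits an antiunitary symmetry $\mathcal{A}$ that exchanges the two members of the Kramers pair and acts on the electron-spin factor as the $\pi$-rotation $i\sigma_y$ composed with complex conjugation; that is, up to an orbital unitary and an overall phase, $\ket{\bar\xi} = \mathcal{A}\ket\xi$ with $\mathcal{A} = (U_{\mathrm{orb}}\otimes i\sigma_y)\,\kappa$, where $\kappa$ is complex conjugation in a real basis and $U_{\mathrm{orb}}$ is unitary on the orbital factor. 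The first step is to exhibit $\mathcal{A}$. For the centrosymmetric groups $O_h$ and $T_h$ it is simply $P\Theta$, space inversion composed with time reversal: it fixes every $\bm k$, it squares to $-1$ on the spinor (so it genuinely pairs the two degenerate Bloch states rather than acting within a single one), and --- the crucial point --- inversion is trivial on spin, so the spin part of $P\Theta$ is exactly $i\sigma_y\kappa$. This is why Table~\ref{table:1} records ``$\forall\,\overrightarrow{OA}$'' for those two groups. For $T_d$, which lacks inversion, one instead uses, for each direction in the enumerated set, a two-fold or $S_4$ element $R$ with $R\bm k = -\bm k$ so that $R\Theta$ fixes $\bm k$; along the coordinate axes and the body diagonal one arranges --- by a suitable choice of $R$ together with the spin-harmonic basis of Eq.~\eqref{eq:SPlike} --- that the spin part again reduces to $i\sigma_y\kappa$. (The groups $T$ and $O$ do not appear in the lemma because no such operation exists, matching the ``$-$'' entries of the table.)

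The second step is the computation, which is then essentially forced. Write $\ket\xi = \ket{\psi_\uparrow}\otimes\ket\uparrow + \ket{\psi_\downarrow}\otimes\ket\downarrow$ in the standard real spin basis, with $\ket{\psi_\uparrow},\ket{\psi_\downarrow}$ the corresponding (unnormalized) orbital components, so that the spin reduced density matrix has entries $(\rho_S)_{ss'} = \braket{\psi_{s'}|\psi_s}$. Acting with $\mathcal{A}$ replaces $\ket{\psi_\uparrow},\ket{\psi_\downarrow}$ by orbital-conjugated (and $U_{\mathrm{orb}}$-rotated) components and, since $i\sigma_y$ exchanges $\ket\uparrow$ and $\ket\downarrow$ up to signs, swaps which orbital component multiplies which spin ket. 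Taking the partial trace over the orbital degrees of freedom, and using that $U_{\mathrm{orb}}$ is norm-preserving so that $\braket{\kappa\phi|\kappa\chi} = \braket{\chi|\phi}$ still holds after the extra unitary, gives $(\bar\rho_S)_{\uparrow\uparrow} = (\rho_S)_{\downarrow\downarrow}$, $(\bar\rho_S)_{\downarrow\downarrow} = (\rho_S)_{\uparrow\uparrow}$, and $(\bar\rho_S)_{\uparrow\downarrow} = -(\rho_S)_{\uparrow\downarrow}$, $(\bar\rho_S)_{\downarrow\uparrow} = -(\rho_S)_{\downarrow\uparrow}$. Since $\sigma_y M^T \sigma_y$ interchanges the diagonal entries of $M$ and reverses the sign of its off-diagonal entries, this is precisely $\bar\rho_S = \sigma_y \rho_S^T \sigma_y$. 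The orbital unitary $U_{\mathrm{orb}}$ and the overall phase of $\mathcal{A}\ket\xi$ drop out under the partial trace, so the identity does not depend on the residual sign/phase freedom in the choice of Kramers partner; in particular $\rho_S$ and $\bar\rho_S$ then share the same spectrum, which is the form in which the lemma feeds into the subsequent entanglement statement.

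I expect the genuine obstacle to be Step~1 for the non-centrosymmetric group $T_d$: one must verify, direction by direction within the enumerated set, that an antiunitary symmetry with the required $\sigma_y$-type spin action really exists --- and, implicitly, that it fails for generic directions, since that is what pins down the list in Table~\ref{table:1}. The subtlety is that a naive two-fold axis $R$ with $R\bm k = -\bm k$ contributes $D(R)\,i\sigma_y$ to the spin part, where $D(R)$ is the $SU(2)$ matrix of $R$, and this equals $\pm i\sigma_y$ only for special orientations of the axis relative to $\bm k$; for other orientations one instead obtains $\bar\rho_S = W \rho_S^T W^\dagger$ with $W \neq \sigma_y$, a different relation, unless the extra little-group symmetry along the line further constrains $\rho_S$. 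Pinning this down --- choosing the correct element among the two-fold axes and $S_4$ operations of $T_d$ (or equivalently invoking the explicit spin-harmonic form of the Bloch states along each line), and checking that $R\Theta$ squares to $-1$ on the two-dimensional band subspace --- is the delicate bookkeeping. For $O_h$ and $T_h$ the lemma then follows immediately from Step~2.
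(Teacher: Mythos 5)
Your Step~2 is fine, and your treatment of the centrosymmetric groups via $P\Theta$ (inversion times time reversal, spin part $i\sigma_y\kappa$) is exactly the paper's first case. The genuine gap is where you yourself pointed, but it is worse than ``delicate bookkeeping'': for $T_d$ the antiunitary you postulate does not exist. Along $\bm{k}\parallel(001)$ the elements of $T_d$ reversing $\bm{k}$ are $C_{2x},C_{2y},S_{4z},S_{4z}^{-1}$; the spin part of $R\Theta$ is $D(R)\,i\sigma_y\kappa$, and this is proportional to $\sigma_y\kappa$ only if $D(R)\propto\mathbb{1}$, i.e.\ only for $R=E$ or $R=I$, neither of which is available. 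Concretely, $C_{2x}\Theta$ and $C_{2y}\Theta$ have spin parts $\propto\sigma_z\kappa$ and $\kappa$, and they square to $+1$ on spinors, so they need not even exchange the two partners. Along $\bm{k}\parallel(111)$ the situation is starker: no element of $T_d$ maps $(111)$ to $-(111)$ (the stabilizer of that axis is $C_{3v}$, and the orbit of $(111)$ consists only of the four ``positive'' body diagonals), so the little group contains no antiunitary operation at all; the double degeneracy there is an essential one coming from the two-dimensional irrep $\Lambda_4$ of $C_{3v}$, not from any Kramers-type operation. Hence ``one arranges that the spin part again reduces to $i\sigma_y\kappa$'' is not something that can be arranged.

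This is precisely why the paper proves the $T_d$ cases by a different mechanism: it writes the explicit spin--orbital basis functions of the little-group irreps ($\Delta_5$ of $C_{2v}$ and $\Lambda_4$ of $C_{3v}$) and maps one partner to the other with the unitary transfer (projection) operator $\hat{P}^{\Gamma}_{\bar{\xi}\xi}$, then computes $\rho_S$ and $\bar{\rho}_S$ directly --- remarking explicitly that no use of time reversal is made. Your escape hatch (``the extra little-group symmetry along the line further constrains $\rho_S$'') is in fact the essential ingredient rather than a side condition: along $(001)$, $C_{2z}$ in the little group forces $\rho_S$ and $\bar{\rho}_S$ to be diagonal, after which the antiunitary $S_{4z}\Theta$ --- which does exchange the partners, since its square acts as $\mp C_{2z}$ with eigenvalues $\neq 1$ on the doublet --- yields the swap of diagonal entries, reproducing the paper's Eq.~\ref{eq:case3}; along $(111)$ even this hybrid route is unavailable and one must argue entirely within the irrep, as the paper does. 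As written, your Step~1 therefore fails for the non-centrosymmetric entries of Table~\ref{table:1}, and the lemma is only established by your argument for $O_h$ and $T_h$.
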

\begin{proof}
    Consider the eigenstates $\ket{\xi}$ and $\ket{\bar{\xi}}$ of Kramers-paired bands corresponding to the irreducible representations of inversion-symmetric groups and the group $T_d$. We discuss the proof in three parts as the underlying symmetry for each of these cases is different. The first part deals with the inversion-symmetric groups. For the group $T_d$, degenerate eigenstates occur only along the $\bm{\Delta}$- and the $\bm{\Lambda}$- directions. The second and the third parts deal with the two cases of $T_d$.\\
    \begin{enumerate}
            \item The eigenstates $\ket{\xi}$ and $\ket{\bar{\xi}}$ of bands in the inversion-symmetric groups are related by an inversion followed by the time-reversal operation $\theta$ \cite{PhysRev96280}, \cite{inui2012group}, having the form,
            \begin{equation}
                \ket{\xi} = \sum_{r} u(r) e^{i\bm{k}\cdot \bm{r}},\,\,\,\, \ket{\bar{\xi}} = \sum_{r} \theta(u(-r)) e^{i\bm{k}\cdot \bm{r}} .
            \end{equation}
            The reduced density matrices of these two states are, $\rho_S = \Tr_{orb}\rho = \Tr_{orb} \left(\ketbra{\xi}{\xi}\right)$ and $\bar{\rho}_S = \Tr_{orb} \bar{\rho} = \Tr_{orb} \left( \ketbra{\bar{\xi}}{\bar{\xi}}\right)$. Inversion is a space operation and therefore does not affect the spin part. Since $\rho_S$ is hermitian, $\bar{\rho}_S = \sigma_y \kappa \rho_S \kappa \sigma_y = \sigma_y\rho_S^T\sigma_y$ ($\kappa$ indicating complex conjugation)\cite{TR}.

            \item In the $\bm{\Lambda}$-direction of $T_d$, the states $\ket{\xi}$ and $\ket{\bar{\xi}}$ belong to the irreducible representation $\Lambda_4$ \cite{Koster1963ThePO} and are given by linear combinations of the basis states of the little group $C_{3v}$. The basis states of $C_{3v}$, for the first partner of the irreducible representation $\Lambda_4$, can be chosen to be the following combinations involving the s-like and p-like (cf. Eq.~\ref{eq:SPlike}) mutually orthogonal states,
            \begin{align}\label{eq:c3vbasis}
                \ket{B^{\Lambda_4}_1} &= \ket{S}\ket{\uparrow}, \notag \\
                \ket{B^{\Lambda_4}_2} &= \frac{1}{\sqrt{3}} \left(\ket{X}\ket{\uparrow} + \ket{Y}\ket{\uparrow} + \ket{Z}\ket{ \uparrow}\right), \notag \\
                \ket{B^{\Lambda_4}_3} &= \frac{1}{\sqrt{6}} \left( \ket{X}(\ket{\uparrow} + i\ket{\downarrow})  - \ket{Y}(\ket{\uparrow}  - \ket{\downarrow})\right), \notag \\
                &- \frac{(1+i)}{\sqrt{6}}\ket{Z}\ket{\downarrow}.
            \end{align}
            Using the projection operator formalism as defined in section 4.4 of \cite{dresselhaus2007group} where $\hat{P}^{\Gamma_n}_{kl}$, by definition, transforms one basis vector $\ket{l}$ into the basis vector $\ket{k}$ of the same representation $\Gamma_n$. That is, the eigenstate $\ket{\xi} = \alpha_1\ket{B^{\Lambda_4}_1} + \alpha_2\ket{B^{\Lambda_4}_2} + \alpha_3\ket{B^{\Lambda_4}_3}$ of the representation $\Lambda_4$ is transformed using the projector into $\ket{\bar{\xi}} = \hat{P}^{\Lambda_4}_{\bar{\xi}\xi}\ket{\xi}$. We see that this transformation involves a reversal of spin directions in these basis states. 
            
            Indeed, direct calculation, starting from
            \begin{equation}
                \bar{\rho} = P^{\Lambda_4}_{\bar{\xi}\xi}\rho P^{\Lambda_4\dagger}_{\bar{\xi}\xi},
            \end{equation}
            where $\bar{\rho}$ is the density matrix corresponding to the state $\bar{\xi}$,
             shows that this leads to the reduced density matrices $\rho_S = \Tr_{orb}(\rho)$ and $\bar{\rho}_S = \Tr_{orb}(\bar{\rho})$ being related as
            \begin{equation}
                \bar{\rho}_S = \left(
                \begin{array}{rr}
                    \rho_{S22} & -\rho_{S12}\\
                    -\rho_{S21} & \rho_{S11}
                \end{array}
                 \right)
            \end{equation}
            which is the component-wise form of Eq.\ref{eq:rsrts}. It is interesting to note that this derivation makes no use of the time reversal symmetry.

            \item The relevant representation $\Delta_5$ of the little group $C_{2v}$ of $T_d$ has four basis states formed out of the $s$- and $p$-like mutually orthogonal states:
    \begin{align}
        \ket{B^{\Delta_5}_1} &= \ket{S}\ket{\uparrow}, \notag \\
        \ket{B^{\Delta_5}_2} &= \ket{X}\ket{\downarrow}, \notag \\
        \ket{B^{\Delta_5}_3} &= \ket{Y}\ket{\downarrow}, \notag \\
        \ket{B^{\Delta_5}_4} &= \ket{Z}\ket{\uparrow}.
    `\end{align}
    Note the alternation of spin directions!
    The projector $\hat{P}^{\Delta_5}_{\bar{\xi}\xi}$, following the definition of \cite{dresselhaus2007group}, transforms the state $\ket{\xi} = \alpha_1\ket{B^{\Delta_5}_1} + \alpha_2\ket{B^{\Delta_5}_2} + \alpha_3\ket{B^{\Delta_5}_3} + \alpha_4\ket{B^{\Delta_5}_4}$ into $\ket{\bar{\xi}} = \hat{P}^{\Delta_5}_{\bar{\xi}\xi}\ket{\xi}$. Given the form of the eigenvector $\ket{\xi}$, its spin density matrix is diagonal,
    \begin{equation}
        \rho_S = 
        \begin{pmatrix}
            \rho_{S11} & 0\\
            0 & \rho_{S22}
        \end{pmatrix}\label{eq:start}
    \end{equation}
    Using the projector, we arrive at the state,
    \begin{equation}
        \bar{\rho}_{S} = \Tr_{orb} \left( \hat{P}^{\Delta_5}_{\bar{\xi} \xi} \rho \hat{P}^{\Delta_5\dagger}_{\bar{\xi} \xi}\right)
        = \begin{pmatrix}
            \rho_{S22} & 0 \\
            0 & \rho_{S11}
        \end{pmatrix},\label{eq:case3}
    \end{equation}
    \end{enumerate}
    consistent with Eq.\ref{eq:rsrts}.
\end{proof}
Note that Eq.\ref{eq:case3} implies the satisfaction of the condition Eq.\ref{eq:rsrts} for any choice of the partners of the representation:
for an arbitrary SU(2) rotation
\begin{equation}
    V =
    \begin{pmatrix}
        X & Y \\
        -Y^* & X^*
    \end{pmatrix},
\end{equation}

\begin{align}\label{eq:vrv}
    V\bar{\rho}_S V^\dagger &=
    \begin{pmatrix}
        c|X|^2-(1-c)|Y|^2 & (1 - 2c)XY \\
        (1 - 2c)X^*Y^* & -(c-1)|X|^2+c|Y|^2
    \end{pmatrix} \notag\\
    &=\sigma_y(V\rho_S V^\dagger)^T\sigma_y,
\end{align}
where $c=\rho_{S11}$ of Eq.\ref{eq:start}.\\
\\
Using the lemma, we can show that these two states are maximally spin-orbital entangled when $det(\bm{g}_S) = 0$.

\begin{theorem}\label{thm:thm2}
    Consider the reduced density matrices of the spin subsystem, $\rho_S, \bar{\rho}_S$, of the Kramers-paired bands corresponding to the irreducible representations in red along the directions $\overrightarrow{OA}$ as indicated in Table.\ref{table:1}. Then
    \begin{equation}
        det(\bm{g}_S) = 0 \implies \mathcal{S}(\rho_S) = \mathcal{S}(\bar{\rho}_S) = 1,
    \end{equation}
     where $\mathcal{S}(\rho_S) = -\Tr( \rho_S \log\:\rho_S)$ is the entanglement entropy of the reduced density matrix of the spin subsystem.
\end{theorem}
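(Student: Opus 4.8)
\section*{Proof proposal}

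The plan is to reduce the statement to a single scalar identity and then feed it Lemma~\ref{lm:lm1}. First I would make $\bm g_S$ fully explicit in terms of $\rho_S$. Using Eqs.~(\ref{eq:s_mat})--(\ref{eq:gs}) together with Lemma~\ref{lm:lm1} (which implies $\expval{\bar\xi|\sigma_i|\bar\xi} = -\expval{\xi|\sigma_i|\xi}$, so each $2S_i$ is traceless), one finds that the $i$-th row of $\bm g_S$ is proportional to $(\Re b_i,\Im b_i,a_i)$, where $a_i=\expval{\xi|\sigma_i|\xi}$ are the components of the Bloch vector $\bm a$ of $\rho_S$ and $b_i=\expval{\xi|\sigma_i|\bar\xi}$. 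Hence, with a nonzero proportionality constant,
\begin{equation}
\det\bm g_S \;\propto\; (\Re\bm b\times\Im\bm b)\cdot\bm a \;\propto\; i\,(\bm b\times\bm b^{*})\cdot\bm a ,
\end{equation}
which is manifestly real since $\bm b\times\bm b^{*}$ is purely imaginary. The key observation is that $\bm a=0$ is precisely the condition $\rho_S=\tfrac12 I$, i.e.\ $\mathcal S(\rho_S)=1$; and by Lemma~\ref{lm:lm1} this forces $\bar\rho_S=\sigma_y\rho_S^{T}\sigma_y=\tfrac12 I$ as well (in fact $\rho_S$ and $\bar\rho_S$ always have equal entropy). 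So the theorem is exactly the implication $\det\bm g_S=0\Rightarrow\bm a=0$.

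Next I would align the electron-spin axes with the principal axes of $\rho_S$; this multiplies the rows of $\bm g_S$ by a real orthogonal matrix, leaving $|\det\bm g_S|$ unchanged, and makes $\bm a=(0,0,a)$ with $a=|\bm a|$. Writing the Schmidt decomposition $\ket{\xi}=\sqrt{\lambda_0}\,\ket{e_0}\ket{0}+\sqrt{\lambda_1}\,\ket{e_1}\ket{1}$ in this basis ($\lambda_0\ge\lambda_1$, $\lambda_0+\lambda_1=1$, $a=\lambda_0-\lambda_1$), and using that along the $\overrightarrow{OA}$ directions Lemma~\ref{lm:lm1} makes $\bar\rho_S$ diagonal in the same basis with the eigenvalues swapped, so that $\ket{\bar\xi}=\sqrt{\lambda_1}\ket{f_0}\ket{0}+\sqrt{\lambda_0}\ket{f_1}\ket{1}$, a direct evaluation of Eqs.~(\ref{eq:s_mat})--(\ref{eq:gs}) gives
\begin{equation}
\det\bm g_S\;\propto\; a\,\bigl(\lambda_0^{2}\,|\expval{e_0|f_1}|^{2}-\lambda_1^{2}\,|\expval{e_1|f_0}|^{2}\bigr).
\end{equation}
The conclusion then follows provided one shows that along these symmetry directions $\ket{\bar\xi}$ is the exact spin-reversed image of $\ket{\xi}$, i.e.\ $\ket{f_1}\propto\ket{e_0}$ and $\ket{f_0}\propto\ket{e_1}$ up to phases, so that both overlaps have modulus $1$. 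Then $\det\bm g_S\propto a(\lambda_0^{2}-\lambda_1^{2})=a(\lambda_0-\lambda_1)(\lambda_0+\lambda_1)=a^{2}$, and $\det\bm g_S=0$ forces $a=0$, hence $\rho_S=\tfrac12 I$ and (by Lemma~\ref{lm:lm1}) $\bar\rho_S=\tfrac12 I$, giving $\mathcal S(\rho_S)=\mathcal S(\bar\rho_S)=1$.

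To establish the spin-reversal property I would proceed case by case in parallel with the proof of Lemma~\ref{lm:lm1}. For $T_d$ the little-group ($C_{3v}$ along $\bm\Lambda$, $C_{2v}$ along $\bm\Delta$) basis states are products $\ket{\text{orbital}}\otimes\ket{\uparrow/\downarrow}$ (Eq.~(\ref{eq:c3vbasis}) and the $\ket{B^{\Delta_5}_i}$), and the partner-switching projector $\hat P^{\Gamma}_{\bar\xi\xi}$ acts on each such state by flipping its spin factor (times a sign), so the orbital factor of $\ket{\bar\xi}$ paired with a given spin coincides, up to phase, with the orbital factor of $\ket{\xi}$ paired with the opposite spin — exactly what is needed. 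For the inversion-symmetric groups the operation is time reversal followed by inversion, $\ket{\bar\xi}=P\Theta\ket{\xi}$; here the claim amounts to the Schmidt orbital vectors $\ket{e_0},\ket{e_1}$ being fixed, up to a phase, by the antiunitary involution $A=P\kappa$, which I would derive from the representation of the little group along $\overrightarrow{OA}$ carried by the degenerate pair together with the explicit $\alpha\to0$ forms (\ref{eq:SPlike})--(\ref{eq:SPlikeplus}) and perturbative continuity in $\alpha$. This structural step — equivalently, that the two overlaps saturate the Cauchy–Schwarz inequality, i.e.\ that $\det\bm g_S$ is a perfect square in $a$ — is the main obstacle; the linear algebra of the first two paragraphs is otherwise routine once Lemma~\ref{lm:lm1} is in hand.
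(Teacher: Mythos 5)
Your reduction of the statement to ``$\det\bm{g}_S=0\Rightarrow\bm{a}=0$'' is a fair restatement, and your intermediate identity $\det\bm{g}_S\propto a\,(\lambda_0^{2}|\langle e_0|f_1\rangle|^{2}-\lambda_1^{2}|\langle e_1|f_0\rangle|^{2})$ is correct (it reproduces $\det\bm{g}_S(\bm{k}=0)=-8/27$). But the structural step you yourself flag as the main obstacle — that along $\overrightarrow{OA}$ one has $\ket{f_1}\propto\ket{e_0}$ and $\ket{f_0}\propto\ket{e_1}$, so both overlaps have modulus one and $\det\bm{g}_S\propto a^{2}$ — is not merely unproven, it is false. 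Take the $\alpha\to0^+$ states of Eq.~\ref{eq:SPlike}, which are the limit of the Kramers pair on any ray: the Schmidt decomposition of $\ket{\xi}$ pairs $\ket{Z}$ with $\ket{\uparrow}$ and $(\ket{X}+i\ket{Y})/\sqrt{2}$ with $\ket{\downarrow}$, while $\ket{\bar\xi}$ pairs $\ket{Z}$ with $\ket{\downarrow}$ and $(-\ket{X}+i\ket{Y})/\sqrt{2}$ with $\ket{\uparrow}$; hence $\langle e_1|f_0\rangle=1$ but $\langle e_0|f_1\rangle=0$. The antiunitary relating the partners complex-conjugates the orbital factors, so a complex orbital Schmidt vector is generically \emph{orthogonal} to, not proportional to, its image — your ``spin-reversal'' property would essentially require real orbital wavefunctions. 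Moreover, $\det\bm{g}_S\propto a^{2}$ would fix the sign of $\det\bm{g}_S$ along the whole ray, contradicting Theorem~\ref{thm:thm1} (and Fig.~\ref{fig:singval}), where the determinant interpolates from $-8/27$ at $\Gamma$ to $+8$ at $k_{ND}$. Your own correct formula shows the real difficulty: $\det\bm{g}_S$ can vanish through the bracket with $a\neq0$, and ruling that out is the entire content of the theorem; it cannot be done by Cauchy--Schwarz saturation.

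The paper closes this gap by a different mechanism. It performs an SU(2) change of basis of the Kramers doublet that absorbs the right factor $V$ of the singular value decomposition (Eq.~\ref{eq:SVD}), so that $\bm{g}_S=U\Sigma$; Eq.~\ref{eq:vrv} shows the Lemma~\ref{lm:lm1} relation $\bar\rho_S=\sigma_y\rho_S^{T}\sigma_y$ survives this rotation. In that gauge the third column of $\bm{g}_S$ (Eq.~\ref{eq:grot}) is exactly the difference of the Bloch vectors of $\rho_S$ and $\bar\rho_S$, so $\det\bm{g}_S=0\Rightarrow\Sigma_{zz}=0$ kills that column and gives $\rho_S=\bar\rho_S$; combining with the lemma, $\rho_S=\sigma_y\rho_S^{T}\sigma_y=\mathbb{1}-\rho_S$, forces $\rho_S=\bar\rho_S=\tfrac12\mathbb{1}$ and unit entropy. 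In other words, the vanishing singular value is converted into the vanishing of a Bloch-vector \emph{difference}, not of the Bloch vector of $\rho_S$ alone, and no perfect-square structure of $\det\bm{g}_S$ in $a$ is needed. If you want to salvage your route, you would have to prove instead that along $\overrightarrow{OA}$ the bracket $\lambda_0^{2}|\langle e_0|f_1\rangle|^{2}-\lambda_1^{2}|\langle e_1|f_0\rangle|^{2}$ and $a$ cannot vanish separately, which is essentially equivalent to redoing the paper's SVD-gauge argument.
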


\begin{proof}
    The spin $g$-tensor of these states can be decomposed according to Eq.\ref{eq:SVD} as $\bm{g}_S = U\Sigma V$. We go to the coordinate system of the Bloch sphere such that in this transformed basis $V$ is rotated away, i.e., 
    $\bm{g}_S = U\Sigma$. This necessitates a change of basis of $\ket{\xi}$ and $\ket{\bar{\xi}}$ into $\ket{\xi'}$ and $\ket{\bar{\xi}'}$ by an $SU(2)$ transformation,
    \begin{equation}\label{eq:transform}
        \begin{pmatrix}
            \ket{\xi'} \\
            \ket{\bar{\xi}'}
        \end{pmatrix} = 
        \begin{pmatrix}
            X & Y\\
            -Y^* & X^*
        \end{pmatrix}
        \begin{pmatrix}
            \ket{\xi} \\
            \ket{\bar{\xi}}
        \end{pmatrix}
    \end{equation}
    where $X$ and $Y$ are complex amplitudes such that $|X|^2+|Y|^2 = 1$. Equation \ref{eq:vrv} shows that for any $X$ and $Y$ this rotation does not affect the relation between $\rho_S$ and $\bar{\rho}_S$. We choose $X, Y$ to reflect the mapping $SU(2) \longrightarrow SO(3)$ \cite{westra20082},
    \begin{equation}
        V^{-1} =
        \begin{pmatrix}
            \Re(X^2 - Y^2) & \Im(X^2 + Y^2) & -2\Re(XY) \\
            -\Im(X^2-Y^2) & \Re(X^2 + Y^2) & 2\Im(XY) \\
            2\Re(XY^*) & 2\Im(XY^*) & |X|^2 - |Y|^2
        \end{pmatrix}.
    \end{equation}\\
    The total density matrices of the time-reversal pair $\ket{\xi'}$ and $\ket{\bar{\xi}'}$ are $\rho = \ket{\xi'}\bra{\xi'},\:\bar{\rho} = \ket{\bar{\xi}'}\bra{\bar{\xi}'}$. Tracing out the orbital degrees of freedom leaves us with the reduced density matrices of the spin-subsystem, $\rho_S = \Tr_{orb} \rho = \ket{\xi'_S}\bra{\xi'_S},\: \bar{\rho}_S = \Tr_{orb} \bar{\rho} = \ket{\bar{\xi}'_S}\bra{\bar{\xi}'_S}$. \\
    The spin matrix from Eq.\ref{eq:s_mat} can be re-written as
    \begin{align}\label{eq:srot}
        S_i &= 
    \begin{pmatrix}
      \Tr \ket{\xi'}\bra{\xi'}\sigma_i & \Tr \ket{\xi'}\bra{\bar{\xi}'}\sigma_i \\
      \Tr \ket{\bar{\xi}'}\bra{\xi'}\sigma_i & \Tr \ket{\bar{\xi}'}\bra{\bar{\xi}'}\sigma_i,
    \end{pmatrix},
    \end{align}
    and the spin $g$-tensor has the form
    \begin{widetext}
    \begin{equation}\label{eq:grot}
    \bm{g}_S = 
        \begin{pmatrix}
            \Tr (\ket{\xi'}\bra{\bar{\xi}'}\sigma_x + \ket{\bar{\xi}'}\bra{\xi'}\sigma_x) \,\,\,\,& i\Tr (\ket{\xi'}\bra{\bar{\xi}'}\sigma_x - \ket{\bar{\xi}'}\bra{\xi'}\sigma_x) \,\,\,\,& \Tr_{spin} (\rho_S\sigma_x - \bar{\rho}_S\sigma_x) \\
            \Tr (\ket{\xi'}\bra{\bar{\xi}'}\sigma_y + \ket{\bar{\xi}'}\bra{\xi'}\sigma_y) \,\,\,\,& i\Tr (\ket{\xi'}\bra{\bar{\xi}'}\sigma_y - \ket{\bar{\xi}'}\bra{\xi'}\sigma_y) \,\,\,\,& \Tr_{spin} (\rho_S\sigma_y - \bar{\rho}_S\sigma_y) \\
            \Tr (\ket{\xi'}\bra{\bar{\xi}'}\sigma_z + \ket{\bar{\xi}'}\bra{\xi'}\sigma_z) \,\,\,\,& i\Tr (\ket{\xi'}\bra{\bar{\xi}'}\sigma_z - \ket{\bar{\xi}'}\bra{\xi'}\sigma_z) \,\,\,\,& \Tr_{spin} (\rho_S\sigma_z - \bar{\rho}_S\sigma_z) \\
        \end{pmatrix}.
    \end{equation}
    \end{widetext}
    The eigenvectors from Eq.\ref{eq:transform} along with the transformed spin matrix of Eq.\ref{eq:srot} give us the spin $g$-tensor in Eq.\ref{eq:grot}. Recall that $\bm{g}_S = U\Sigma$ which means that up to a sign factor, $det(\bm{g}_S) = det(\Sigma)$. Recall also that $\Sigma$ is the diagonal matrix of singular values in descending order. Since we have
    \begin{equation}
        det(\bm{g}_S) = det(\Sigma) = \Sigma_{xx}\Sigma_{yy}\Sigma_{zz} = 0,
    \end{equation}
    then at least $\Sigma_{zz} = 0$, so that all the entries of the third column of $\bm{g}_S)$ are zero. Thus, 
    \begin{align}\label{eq:rho}
        &\Tr_{spin} (\rho_S\sigma_x - \bar{\rho}_S\sigma_x) = 0,\notag \\
        &\Tr_{spin} (\rho_S\sigma_y - \bar{\rho}_S\sigma_y) = 0,\notag \\
        &\Tr_{spin} (\rho_S\sigma_z - \bar{\rho}_S\sigma_z) = 0, \notag \\
        &\implies \rho_S - \bar{\rho}_S = 0.
    \end{align}
  Now, combining $\rho_S=\bar{\rho}_S$ with $\bar{\rho}_S = \sigma_y \rho_S^T \sigma_y$ (lemma \ref{lm:lm1}) gives 
  \begin{equation}
       \rho_S = \bar{\rho}_S =\tfrac{1}{2}\mathbb{1},
      \end{equation}
    so that
    \begin{equation}
        \mathcal{S}(\rho_S) = \mathcal{S}(\bar{\rho}_S) = 1.  
    \end{equation}
\end{proof}

It should be noted that theorem \ref{thm:thm2} applies not only to the three cubic point groups $O_h, T_h$ and $T_d$, but also to various other non-cubic groups (which we look at no further here). As can be noticed in Table.\ref{table:1}, theorem \ref{thm:thm2} does not apply at all to the groups $T$ and $O$, because they do not have any directions in which pairs of states are degenerate, even taking time reversal into account. This has been verified using the method described in \cite{inui2012group}. 

We show that, due to theorem \ref{thm:thm1}, bands with certain symmetries necessarily have surfaces surrounding the $\Gamma$-point where $det(\bm{g}_S) = 0$. In the rest of the paper, we discuss the application of the theory of the $g$-tensor to the important semiconductors silicon (Si), germanium (Ge) and gallium arsenide (GaAs). We also show the complex topologies of the surfaces of both $\bm{g}_S$ and $\bm{g}=\bm{g}_L+\bm{g}_S$ for important valence and conduction bands with the relevant symmetry in each of the three cases mentioned in the lemma above.\\

\section{Tight-binding studies}

Computations of $g$-factors for important bands in silicon and germanium were carried out already long ago \cite{PhysRevLett.3.217,PhysRev.118.1534,PhysRevLett.6.683}. We take up these computations here, but with an emphasis on illustrating the topological features of the $g$-factor zeros in important bands, and the resulting spin-orbit entanglement properties of the band states. We use a simple tight binding approach, and thus do not strive for ab-initio accuracy. We use a model that has been used extensively for studying states in various nanostructures, but that has never been adjusted or tested for the magnetic properties that we study here. Nevertheless, our calculations give an impression of the general $g$-factor features that can be expected in these semiconductors.

\subsection{Model Hamiltonian}\label{ssec:hamiltonian}

We use here the $sp^3d^5s^*$ tight-binding model \cite{onen2019calculating,nemo_proj,nemo_url} which is seen to be quite accurate for energy eigenstates in Si and Ge. As originally discussed by Chadi \cite{PhysRevB.16.790}, the Hamiltonian contains diagonal on-site energies and off-diagonal interaction terms. The spin-orbit Hamiltonian contains $\expval{ p_{i,\sigma}|H_{SO}|p_{j,\sigma}}$ interactions between $p_x$ and $p_y$ orbitals of the same spin $\sigma$ and $\expval{ p_{i,\sigma}|H_{SO}|p_{k,\sigma'}}$ on-site interactions between $p_{x/y}$ and $p_z$ orbitals of opposite spins $\sigma, \sigma'$. The spin-orbit interaction between the d-orbitals is much smaller compared to that between the p-orbitals and is usually neglected. The parameters for Si and Ge have been taken from \cite{PhysRevB.79.245201}.
The known band structures of crystalline Si and Ge with spin-orbit coupling fits very precisely with this model (Fig. \ref{fig:BS}).
\onecolumngrid\
\begin{figure}[ht!]\
\begin{center}\
\includegraphics[width=0.5\textwidth]{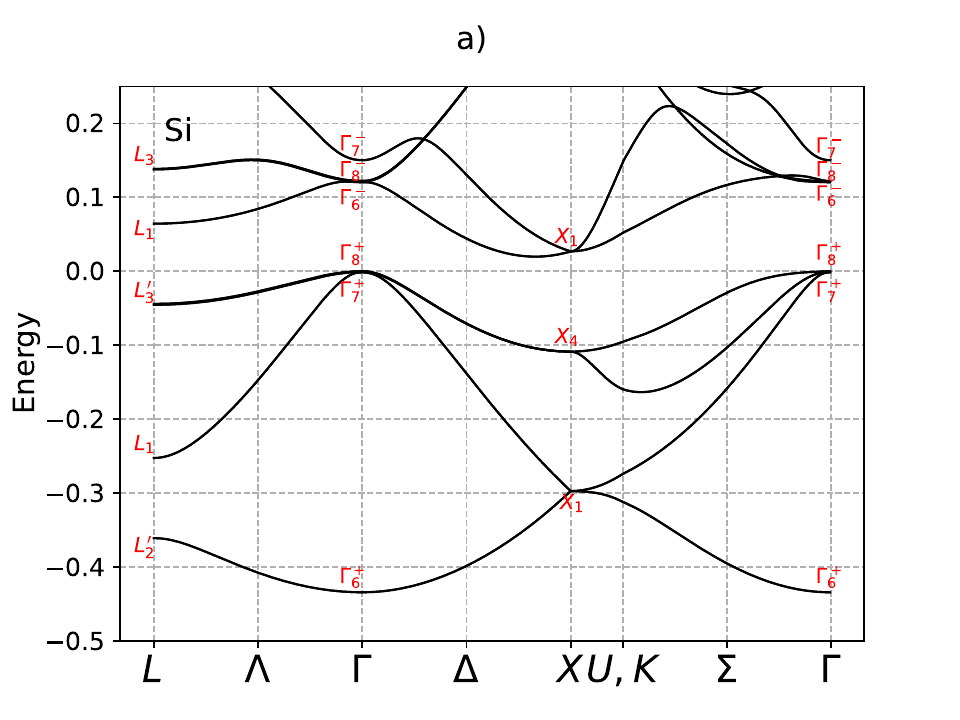}
\includegraphics[width=0.5\textwidth]{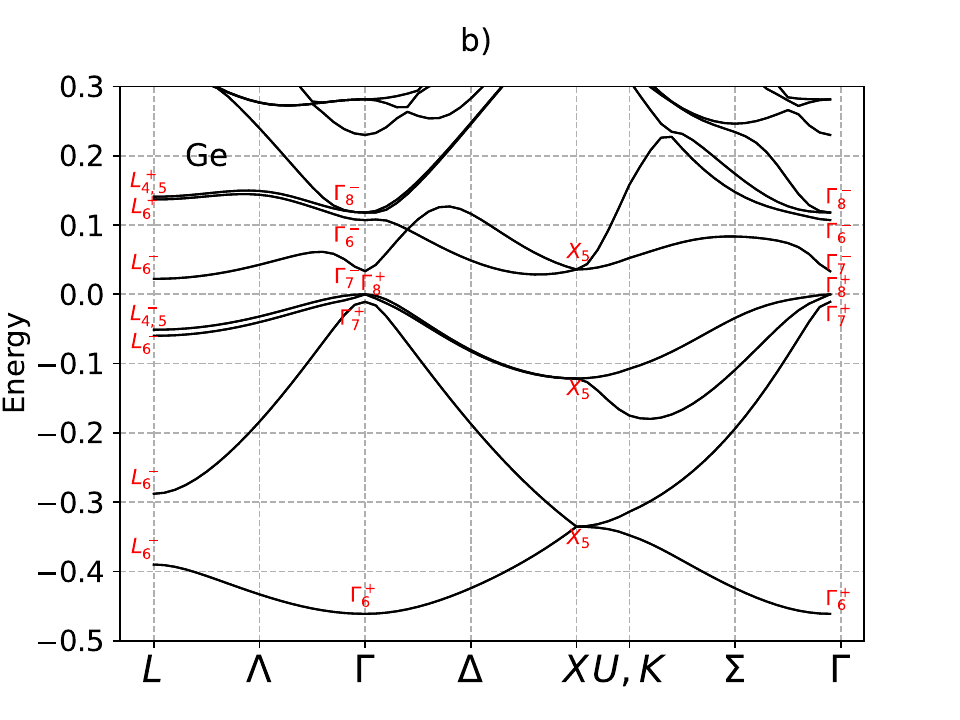}
\caption{(a) The band-structure of silicon. (b) The band-structure of germanium. The red labels are the irreducible representations of the band states at $\bm{\Gamma}$, taking spin-orbit coupling into account; see Table \ref{table:1}. Energies are in atomic Hartree units.}
\label{fig:BS}
\end{center}\
\end{figure}\
\twocolumngrid\

In the case of silicon, we see  that the bands $\Gamma^+_{5}$ and $\Gamma^-_{4}$ of Table.\ref{table:1} have tiny splittings because the spin-orbit interaction in silicon is relatively weak. At the $\Gamma$-point, these bands split into doubly degenerate Kramers pairs and four-fold degenerate bands, $\Gamma^+_{5} \rightarrow \Gamma_7^+ + \Gamma_8^+$ and $\Gamma^-_{4} \rightarrow \Gamma_6^- + \Gamma_8^-$. The splitting is more clearly visible in the case of germanium, the split-off band $\Gamma^+_7$ being relatively far from the band $\Gamma^+_8$. The gap between the bands $\Gamma^-_8$ and $\Gamma^-_6$ is also clearly seen.

\subsection{Treatment of momentum matrix elements}\label{sssec:Si_intra}

Following \cite{PhysRevB.63.201101} we approximate the momentum matrix elements as the gradient of the Hamiltonian ($\expval{\pi} \sim \expval{\nabla_{\bm{k}}H}$) plus certain non-zero intra-atomic contributions:
\begin{align}\label{eq:p_elements}
\expval{u_n(\bm{k})|\bm{\pi}|u_m(\bm{k})} &= \frac{m}{\hbar}\sum_{\alpha,\beta}a^{a*}_n a^{\beta}_m\nabla_{\bm{k}}\expval{\alpha|H|\beta} \notag\\
&+ \sum_{\alpha,\beta} a^{\alpha *}_n a^\beta_m\expval{\alpha|{\bm d}|\beta}(E_n-E_m).
\end{align}
The second term of Eq.\ref{eq:p_elements} contains the intra-atomic position operator \textit{\textbf{d}}. \\
\\ 
To calculate the orbital magnetic moment contribution $\bm{g}_L$ we must have values for the matrix elements $\bm{d}_{\alpha \beta} = \expval{\alpha|\bm{d}|\beta}$ with the orbital indices $\alpha, \beta$. $\bm{d}_{\alpha \beta}$ is generally not diagonal in the orbital index \cite{PhysRevB.63.201101}. The non-zero cases are determined by symmetry. The $\Delta l = \pm1$ selection rule tells us that the principal off diagonal contribution will be between $s$ and $p$ orbitals. These matrix elements are needed for describing the intra-atomic transitions. In previous tight-binding studies, these have been fit to the  optical properties of the bulk semiconductors \cite{PhysRevB.41.12002}.

Here, we choose instead to fit them to the Land{\'e} $g$-factor in the atomic limit. For isolated atoms the $g$-factor is, according to the Land{\'e} theory, 
\begin{align}
g_J &= \frac{1}{2}\left(1 + \frac{L(L+1) - S(S+1)}{J(J+1)}\right)   \\
&+ \frac{g_0}{2}\left(1 - \frac{L(L+1) - S(S+1)}{J(J+1)}\right)= +\frac{2}{3},  \notag 
\end{align}
where $g_0$ is the vacuum electron $g$-factor. Within our tight binding model in the atomic limit, this value of $g_J = +\sfrac{2}{3}$ occurs when $\expval{s|\bm{d}|p_i}_{Si} = 2.788$ for an isolated Si and $\expval{s|\bm{d}|p_i}_{Ge} = 2.535$ for an isolated Ge. Values are given in atomic units ($a_0$). For GaAs each species has its own intra-atomic contribution; fitting again to the Land{\'e} value gives  $\expval{s|\bm{d}|p_i}_{Ga} = 2.891, \expval{s|\bm{d}|p_i}_{As} = 2.455$. \\
\\
Having set up the tight-binding models, we proceed to calculate the two contributions to $\bm{g}_{tot}=\bm{g}_L+\bm{g}_S$. We focus on the evolution of the singular values of these quantities, and also on the scalar invariant which is the determinant. The following sections discuss the numerical manifestations of theorems \ref{thm:thm1} and \ref{thm:thm2}. We then discuss the complex topologies of the surfaces considered by these two theorems.

\subsection{Calculation of the \texorpdfstring{$\bm{g}$}{TEXT}-factors}

\subsubsection{Singular values of the \texorpdfstring{$\bm{g}$}{TEXT}-tensor}
The important eigenstate pairs for which the results of theorems \ref{thm:thm1} and \ref{thm:thm2} can be seen are $\ket{\xi_{\Gamma_7^+}}$-$\ket{\bar{\xi}_{\Gamma_7^+}}$ (the split-off valence band) and  $\ket{\xi_{\Gamma_6^-}}$-$\ket{\bar{\xi}_{\Gamma_6^-}}$ (a low lying conduction band). Such bands show the change in the sign of the $g$-factors quite clearly. We first look at the lowest conduction band in the case of silicon.
The plots in Fig. \ref{fig:singval} show the singular values of $\bm{g}_S$ and $\bm{g}_{tot}$ for silicon in a randomly chosen direction going outward from the $\Gamma$ point. Since we do not go along a symmetry direction, $\bm{g}$ is highly anisotropic, evolving to have three distinct singular values. For $\bm{g}_S$ (Fig. \ref{fig:singval}), at the $\Gamma$-point, $\Sigma_{xx}, \Sigma_{yy}, \Sigma_{zz}$ are $(\frac{2}{3}, \frac{2}{3}, \frac{2}{3})$, as expected from the above discussion. $det(\bm{g}_S)$  (red dashed curve) starts at $-\sfrac{8}{27}$, crosses zero at the point where at least one singular value goes to zero (exactly one, in this case), and asymptotes eventually to $+8$. 
\onecolumngrid\
\begin{figure}[h!]\
\begin{center}\
\includegraphics[width=0.5\textwidth]{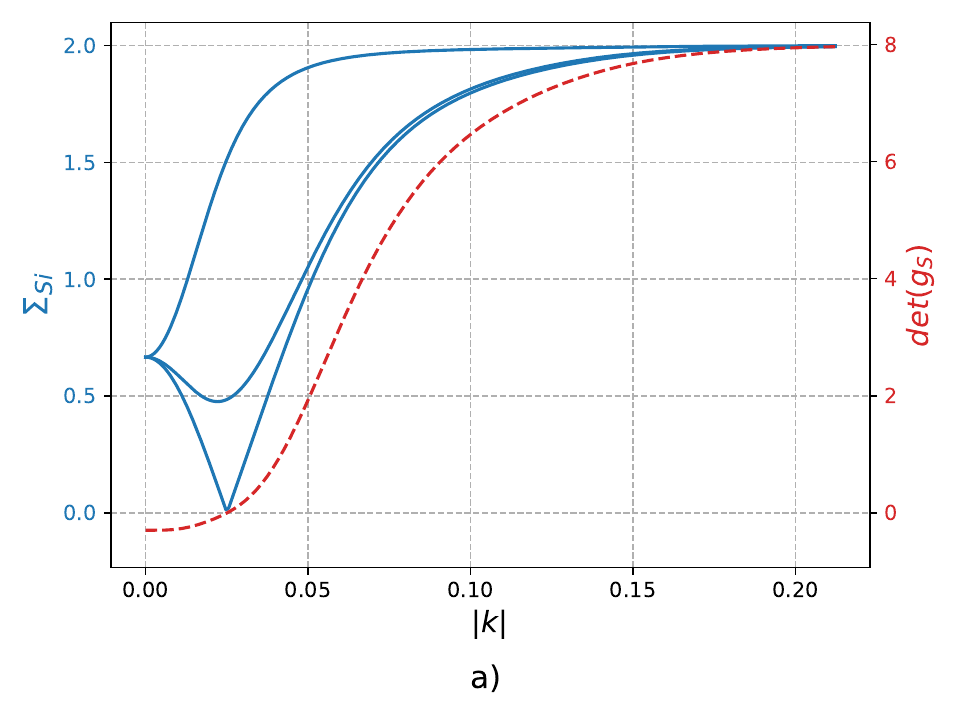}
\includegraphics[width=0.5\textwidth]{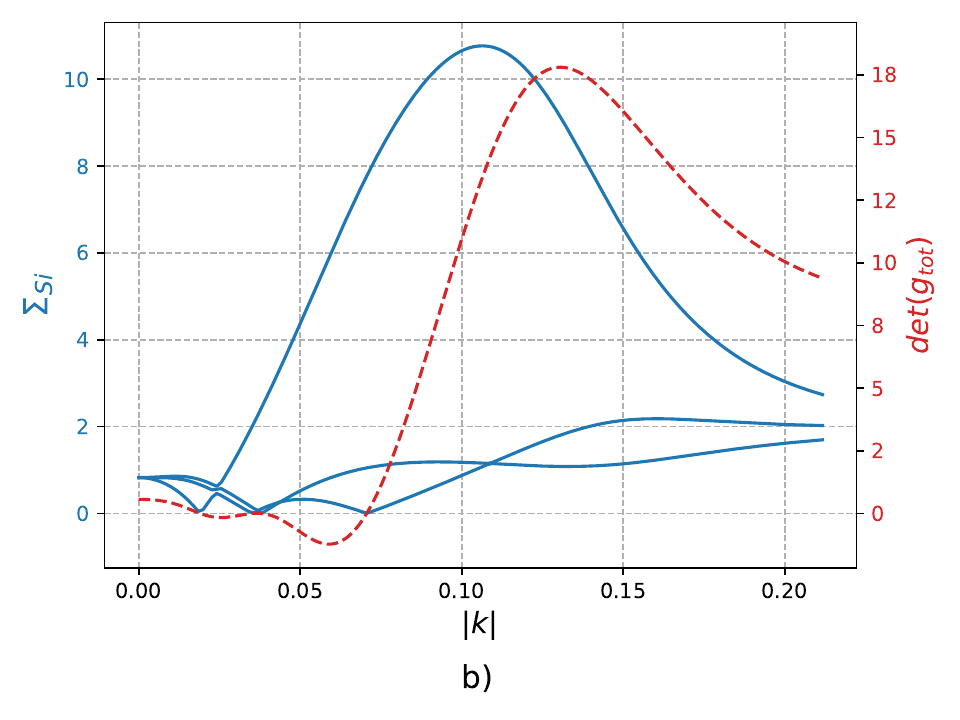}
\caption{Occurrence of zeros in $g$-factors, as seen in the singular values of the $g$-tensor (blue), for the first conduction band of Si. These singular values (by definition non-negative) are calculated in an arbitrary direction to highlight the anisotropy. 
The red dashed curve is the evolution of the determinant of the $g$-tensor. Panel a) shows the singular values and the determinant of only the spin contribution to the $g$-tensor, $g_S$. The singular values start at 2/3 and converge to 2 which is what we must observe for Si. Panel b) shows the singular values and the determinant of $\bm{g}_{tot}=\bm{g}_{L}+\bm{g}_{S}$. The change in the sign of the determinant is a clear indication of the change in the sign of the $g$-factors. $|k|$ is given in inverse atomic units, $a_0^{-1}$.}\
\label{fig:singval}
\end{center}\
\end{figure}
\twocolumngrid\

Although not guaranteed by our theorem, we see that $det(\bm{g}_{tot})$ also crosses zero (Fig. \ref{fig:singval}), in fact multiple times, corresponding to zeros in its singular values. It may also be remarked that the orbital magnetic moment contributions are quite large at places where adjacent bands approach the lowest conduction bands, causing one $g$-factor to be much larger than $2$.

\subsubsection{Determining the zero-crossing surface}\label{sssec:bisection}
According to our theorem, as we travel outward from ${\bf k}=0$ in any direction, at least one singular value of $\bm{g}_S$, $\Sigma_{zz}$, should go to zero where $det(\bm{g}_S)$ goes to 0, at a point we will call $\bm{k}_c$ (there may be an odd number of such points), assuming that the spin-orbit interaction is ``small enough". This is indeed what we observe for bands in both silicon and germanium. The zero-crossing occurs in all directions around the $\Gamma$-point creating one or multiple smooth surfaces surrounding it. These surfaces, given for band $n$ by $det(\bm{g}_S({\bf k}_c))=0$, can be as complex as Fermi surfaces, $E_n({\bf k}_F)=E_F$, and will have the same cubic symmetry. A technique similar to the ones used to find Fermi surfaces can be employed to determine them. We discuss briefly one such method that we implement to study their topology.\\
\\
Consider the singular value decomposition of $\bm{g}_S$ as defined in Eq.\ref{eq:SVD}. Since the $g$-tensor is a real square matrix, $U$ and $V$ can be chosen to be real orthogonal matrices. Then, since $det(\Sigma) \geq 0$,
\begin{equation}\label{eq:gen_kc}
    sgn(det(\bm{g}_S)) = det(U)\cdot det(V).
\end{equation}
The change in sign of the $g$-tensor can be determined by calculating the product of the determinants ($\pm 1$) of the real-valued orthogonal unitary matrices $U$ and $V$. As discussed in theorem.\ref{thm:thm2}, with a change of basis we set $V=\mathbb{1}$ so that $sgn(det(\bm{g}_S)) = det(U)$. But this can cause issues in the directions of high symmetry. The form in Eq.\ref{eq:gen_kc} is generic for all directions in $\bm{k}$-space. Binary search on $sgn(det(\bm{g}_S))$ can be a good choice of algorithm for finding its zero crossing. We find that it converges quite quickly and accurately to $\bm{k}_c$. Using the plots of the singular values, one can get a rough estimate for $\bm{k}_c$. Choosing that as a midpoint of a short interval, one may be able to come close to the value of $\bm{k}_c$ with an arbitrary precision. This is a method we implemented to enhance the accuracy of this detection.\\
\\
We also verify numerically that the eigenstates $\ket{\xi_{\Gamma_{7^+}}(\bm{k}_c)}$ and $\ket{\bar{\xi}_{\Gamma_{7^+}}(\bm{k}_c)}$ have maximal entanglement entropy. Note that all the other states of the form $\ket{\theta,\phi} = \cos(\theta)\ket{\xi_{\Gamma_{7^+}}} + e^{i\phi}\sin(\theta)\ket{\bar{\xi}_{\Gamma_{7^+}}}$ on this Bloch sphere have $S(\Tr_{orb}\ket{\theta,\phi}\bra{\theta,\phi}) < 1$. For example, the entanglement entropies of the cardinal states, $\ket{\pm} = \left[\frac{1}{\sqrt{2}}\left(\ket{\xi_{\Gamma_7^+}} \pm \ket{\bar{\xi}_{\Gamma_7^+}}\right)\right], \ket{\pm i} = \left[\frac{1}{\sqrt{2}}\left(\ket{\xi_{\Gamma_7^+}} \pm i\ket{\bar{\xi}_{\Gamma_7^+}}\right)\right]$ are $S(\Tr_{orb}\ket{\pm}\bra{\pm}) = 0.810$ and $S(\Tr_{orb}\ket{\pm i}\bra{\pm i}) = 0.824$, which are obviously far from the maximal value. \footnote{One could also compute the spin $g$-tensor using the density matrix formalism. We consider the spin matrix in the following form,
\begin{equation*}
    S_i = 
    \begin{pmatrix}
        \Tr(\rho\sigma_i) & \Tr(\ket{\bar{\xi}}\bra{\xi}\sigma_i) \\
        \Tr(\ket{\xi}\bra{\bar{\xi}}\sigma_i) & \Tr(\bar{\rho}\sigma_i)
    \end{pmatrix}.
\end{equation*}
The diagonal terms of $S_i$ are clearly a function of $\rho_S$ and $\bar{\rho}_S$. The off-diagonal terms that do not correspond to any density matrix can be developed using the quantum process tomography formula \cite{nielsen2002quantum}(Refer to eq(8.154) on page 391), where $\varepsilon$ is any linear operator:
    \begin{align*}
        \varepsilon \left(\ket{\xi}\bra{\bar{\xi}}\right) &= \varepsilon \left(\ket{+}\bra{+}\right) + i\varepsilon \left(\ket{+i}\bra{+i}\right) \notag \\
        &- \frac{1+i}{2}\varepsilon \left(\ket{\xi}\bra{\xi}\right) - \frac{1+i}{2}\varepsilon \left(\ket{\bar{\xi}}\bra{\bar{\xi}}\right)
    \end{align*}
    where $\ket{+} = \frac{1}{\sqrt{2}}\left(\ket{\xi}+\ket{\bar{\xi}}\right)$ and $\ket{+i} = \frac{1}{\sqrt{2}}\left(\ket{\xi}+i\ket{\bar{\xi}}\right)$ are cardinal states on the equator of the Bloch sphere.}

\section{Topology of det(\texorpdfstring{$\bm{g}$}{TEXT})=0 surfaces}
The surfaces in $k$ space determined by the equation $det(\bm{g}(\bm{k}))=0$, which may be calculated using the method described in  Sec. \ref{sssec:bisection}, can have complex topology. In this section we look at some of these surfaces for the first conduction and split-off (valence) bands for Si, the second conduction and split-off bands for Ge and the split-off bands of GaAs. These are the bands belonging to the interesting representations as discussed in lemma \ref{lm:lm1}. The calculations for GaAs have been done using an $sp^3$-band model \cite{Chadi1975TightbindingCO,tbgit}, which has been fitted to match valence-band properties. In cases where the adjacent bands approach the bands of our interest, we observe a multiplicity of these surfaces. Both Si and Ge show this multiplicity in various directions.

\subsection{Maximal entanglement surfaces (MES) of \texorpdfstring{$\bm{g}_S$}{TEXT}}
\begin{figure}[h!]
     \centering
         \includegraphics[width=0.55\textwidth]{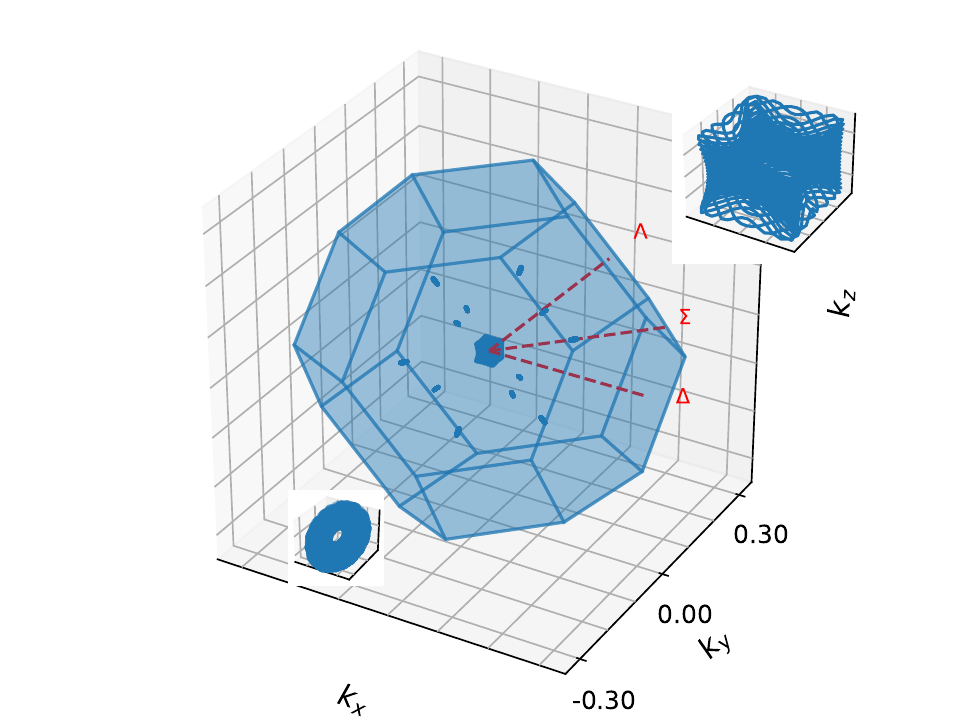}
     \caption{MES of silicon for the first conduction bands, with Brillouin zone boundary shown. The inset in the top right corner shows the innermost surface; grid spacing here is $10^{-2}\,a_o^{-1}$. This surface closely resembles a cuboid, and it is surrounded by flattened toruses in the $\bm{\Sigma}$ (110) directions. The inset in the bottom left corner is a zoom of this structure, with grid spacing here being $10^{-3}\,a_o^{-1}$.
     }
    \label{fig:si_89_gs}
\end{figure}
On the surfaces for which $det(\bm{g}_S) = 0$, the spin and orbital subspaces are maximally entangled for Si and Ge, as shown in theorem.\ref{thm:thm2}. In silicon, for the first conduction band, one sheet of the MES (Fig. \ref{fig:si_89_gs}) is a smooth cuboid with concave faces, but topologically a sphere. It extends out to about a tenth of the distance to the Brillouin zone boundaries (BZB), far away from the band minimum.\\
\begin{figure}[h!]
    \centering
        \includegraphics[width=0.5\textwidth]{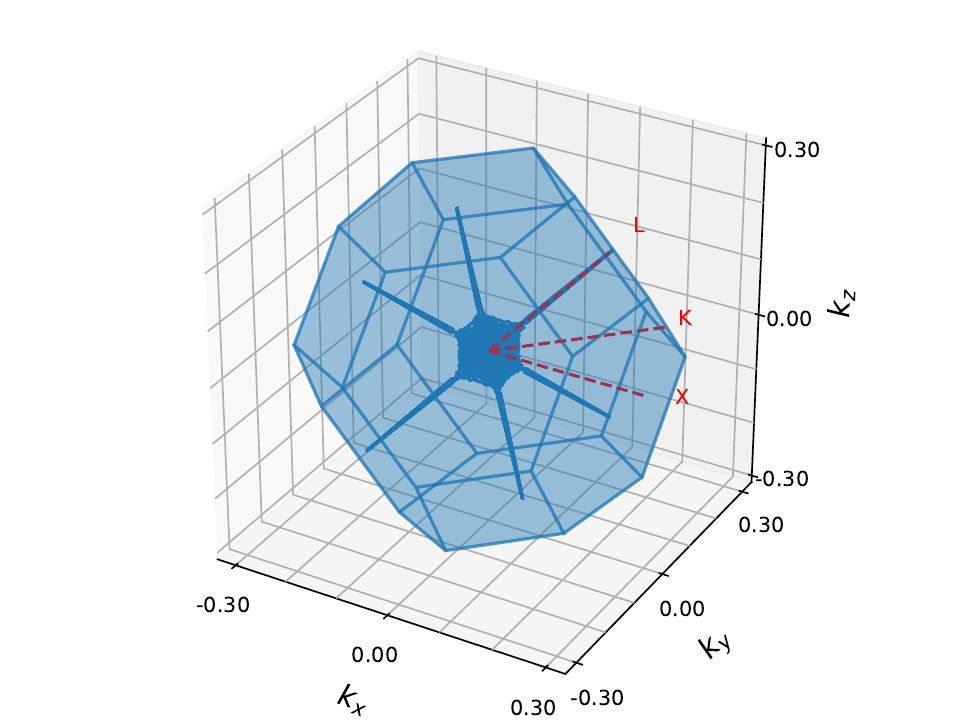}
    \caption{The germanium MES for the second conduction bands, with Brillouin zone boundary shown. The 
    surface resembles a cuboid with openings at the vertices. These openings connect to the central surfaces of the neighbouring zones via long tube like structures, whose thickness is exaggerated by a factor of 2 for visualization. 
    }
    \label{fig:ge1011_mes}
\end{figure}\\
A second family of sheets of this surface, topologically toruses, occurs in the $\bm{\Sigma}$-directions slightly over half-way from the BZB. They exist because the adjacent bands have anti-crossings with the first conduction band along the directions on which these toruses lie.\\
For Ge, the conduction bands exhibit an MES larger than that of Si, occupying almost $20\%$ of the BZ. This Ge MES is similar in shape to the central sheet of Si, except in the 111 directions. In these directions, the surface continues as thin rods parallel to the $\bm{\Lambda}$-directions, connecting the MES to the second Brillouin zone through the hexagonal faces (Fig. \ref{fig:ge1011_mes}). The topology of this MES is thus the same as that of the Fermi surface of copper \cite{ashcroft1976solid}. But compared with this Fermi surface, the connecting rods in this Ge MES are much longer and narrower. In the figure, the thickness of these rods has been exaggerated by a factor of 2 due to the thickness of the point plotting. The presence of these rods is again explained by the fact that adjacent bands lie at a comparable distance to these bands all along $\bm{\Lambda}$ (Fig. \ref{fig:BS}).\\
\\
The split-off band of silicon has an MES with smooth spikes extending into the $\bm{\Delta}$-direction, with the main part of the surface covering less than $10\%$ of the BZ (Fig. \ref{fig:gs_splitoff}(a)). The split-off bands of germanium form a MES almost resembling a sphere, also covering about $10\%$ of the distance of the BZB (Fig. \ref{fig:gs_splitoff}(b)).
\onecolumngrid\
\begin{figure}[h!]\
\begin{center}
\includegraphics[width=0.33\textwidth]{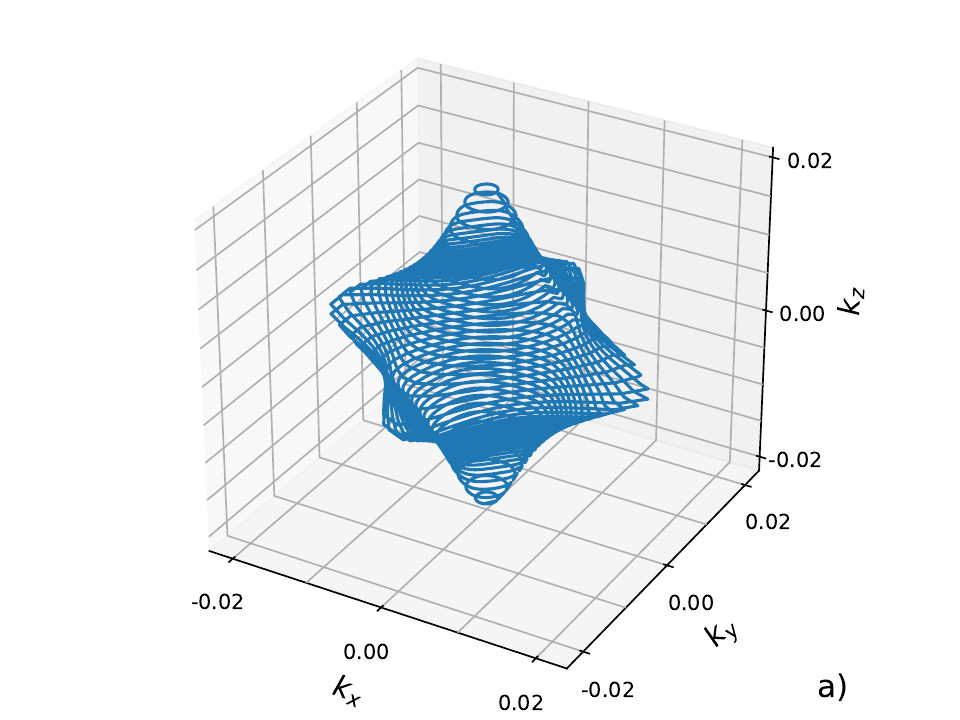}\hfill
\includegraphics[width=0.33\textwidth]{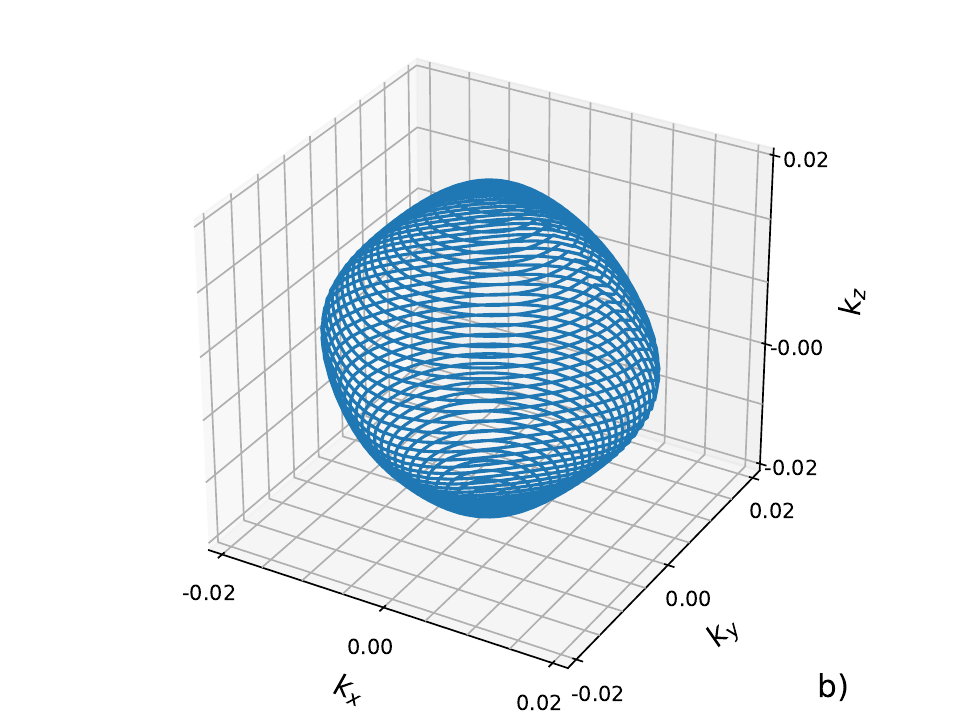}\hfill
\includegraphics[width=0.33\textwidth]{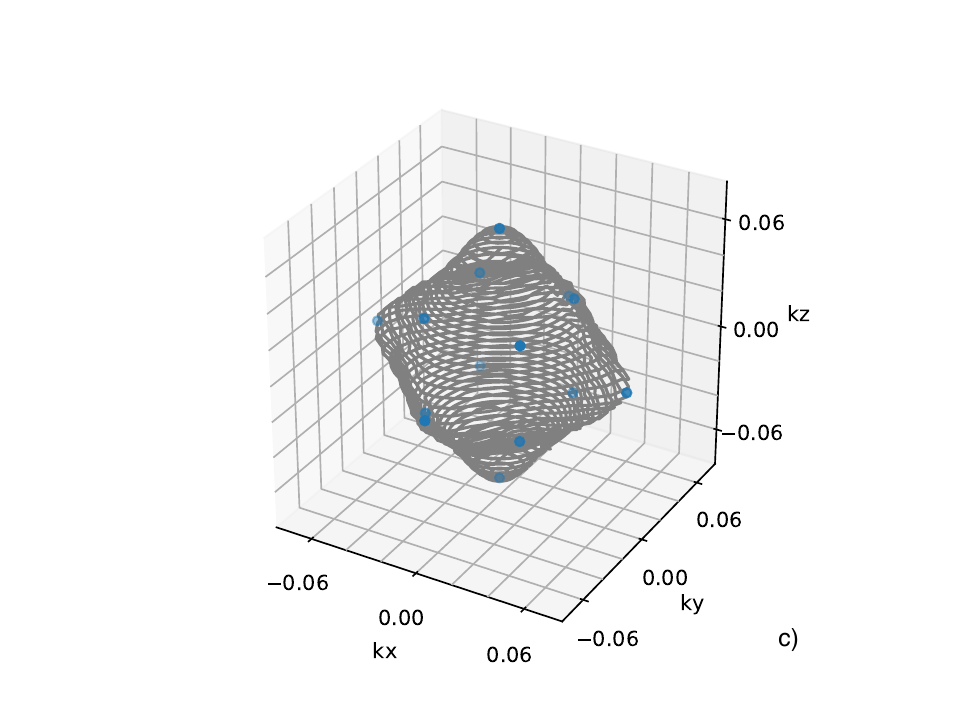}
\caption{MES surfaces of split-off bands (valence bands). (a) The surface of  Si. (b) The surface of  Ge. These surfaces are topologically spherical. (c) The split-off bands of GaAs exhibit maximum spin-orbit entanglement only along the $\bm{\Delta}$ and $\bm{\Lambda}$ directions. The grey surface is the surface where $det(\bm{g}_S) = 0$. The spin-orbit entanglement is maximal only at the blue points on this surface.}\
\label{fig:gs_splitoff}
\end{center}
\end{figure}
\twocolumngrid\
GaAs does not admit a full surface of maximum entanglement since, as shown in theorem \ref{thm:thm2}, we see maximal spin-orbit entanglement in only certain directions (tabulated in Table.\ref{table:1}). The surface where the determinant is 0 for the split-off valence band is plotted in Fig.  \ref{fig:gs_splitoff}(c), with the maximal-entanglement points indicated. 

\subsection{Zero-crossing surfaces of \texorpdfstring{$\bm{g}_{tot}$}{TEXT}}
Contributions of the orbital magnetic moment can have drastic effects on the singular values of $\bm{g}_{tot}$ and on the surfaces in $k$ space for which $det(\bm{g}_{tot}) = 0$. Singular values of $g_{tot}$ can be much larger than two when the energy difference between different bands is small. (Singular values of $g_S$ must be $\leq 2$.) We see large values of $g_{tot}$ already in Fig. \ref{fig:singval}.\\
\\
When approaching high symmetry directions, because of the occurrence of degeneracies in the singular values, our surfaces can pinch together, giving rise to touching conical structures as in the Fermi surfaces of semi-metals such as graphite \cite{Dresselhaus1964TheFS, PhysRevLett.108.140405}. We see such touchings for bands in both Si and Ge. These touchings are known as Lifshitz critical points \cite{voloviktopological}. This is associated with a quantum phase transition which occurs when the value of $det(\bm{g})$ is varied, in which these conical touchings open as gaps switching from the transversal to the longitudinal direction with respect to the axis on which the cones lie. 
This section provides further insights on the changes in topology of the surfaces of bands in both Si and Ge, admitting multiple Lifshitz critical points on the surface $det(\bm{g}_{tot}) = 0$, resembling the situation with some Fermi surfaces \cite{voloviktopological,PhysRevLett.128.076801}.\\
\\
In the case of the split-off valence band, the surfaces where $det(\bm{g}_{tot}) = 0$, for all three cases i.e Si, Ge and GaAs, are topologically spherical. As can be seen in Fig. \ref{fig:gtot_splitoff}, these surfaces do not exhibit any conical touchings. We see that these surfaces are generally larger than the ones for $det(\bm{g}_S) = 0$.
\onecolumngrid\
\begin{figure}[h!]\
\begin{center}\
\includegraphics[width=0.3\linewidth]{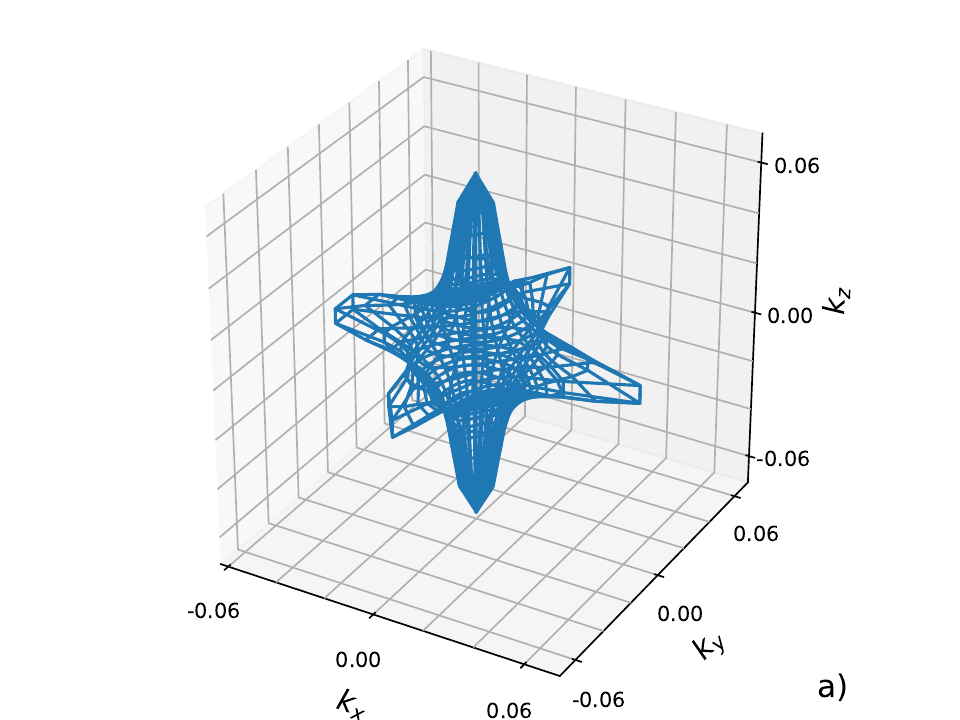}
\includegraphics[width=0.3\linewidth]{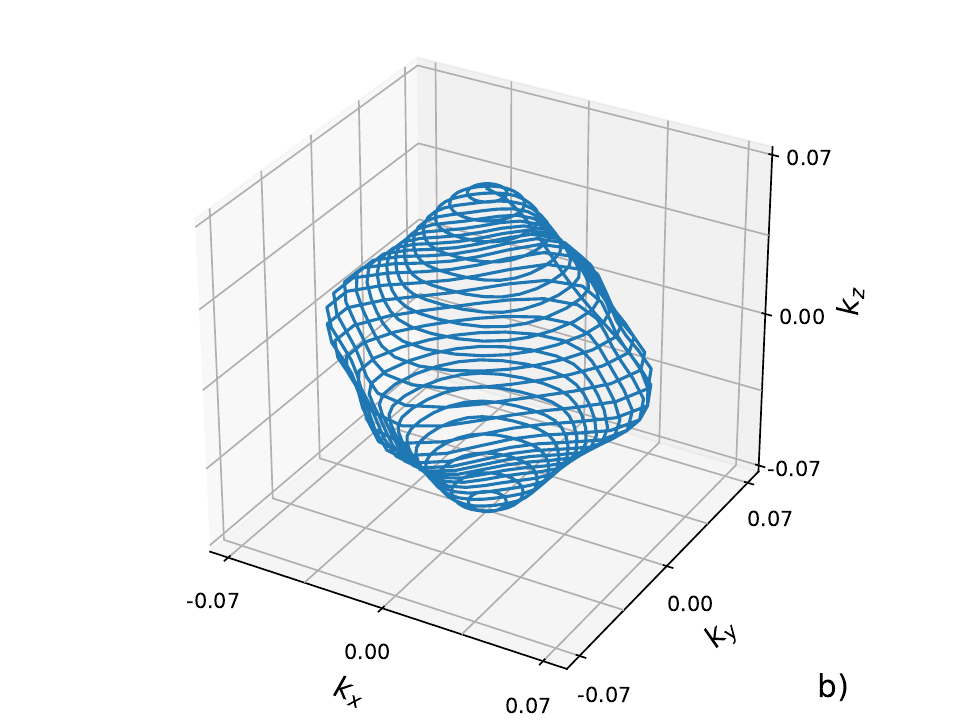}
\includegraphics[width=0.3\linewidth]{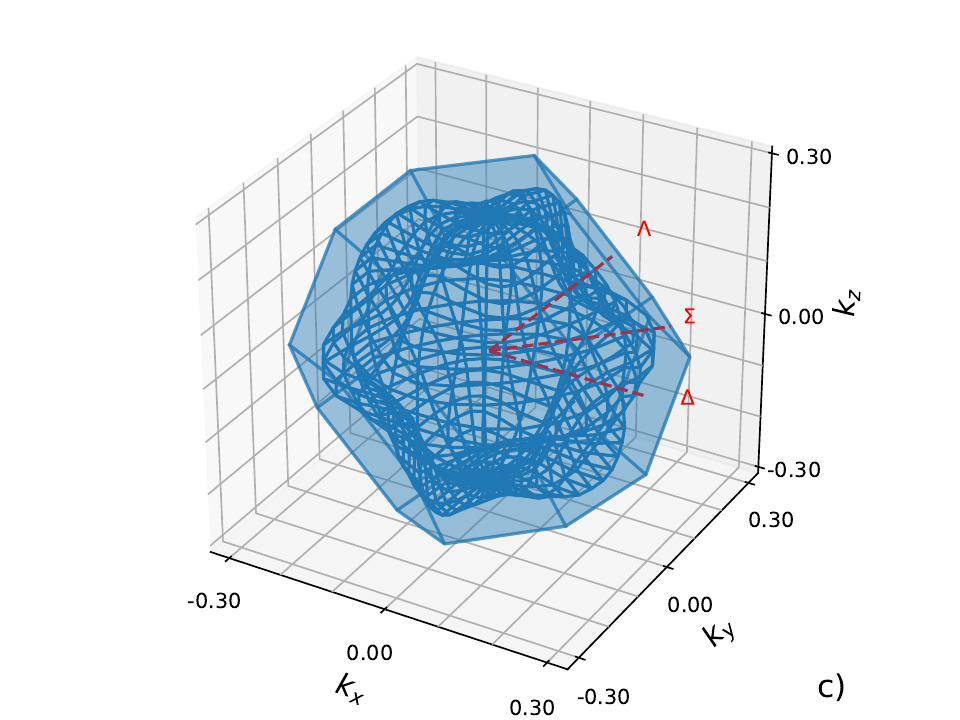}
\caption{Panels (a), (b), (c) show surfaces on which $det(\bm{g}_{tot}) = 0$ of the split-off bands for Si, Ge and GaAs respectively. Si and Ge surfaces are of comparable scales whereas the GaAs surface almost covers the entire BZ.}\
\label{fig:gtot_splitoff}
\end{center}\
\end{figure}\
\twocolumngrid\
All other cases exhibit nontrivial topologies.
For the first conduction band, Si exhibits a double surface around the $\bm{\Gamma}$-point. The inset of Fig. \ref{fig:gtot}(a) depicts a cross section of this complicated surface. As can been seen, it is approximately a square frame surrounded by a wing-like structure. In the full 3D picture, these have the form of two cuboids, one inside the other, extending outwards in the $\bm{\Sigma}$-direction and joining in the $\bm{\Delta}$ and $\bm{\Lambda}$-directions. The six joinings in the cross-section are six conical touchings, indicative of Lifshitz critical points.\\
\\
The occurrence of these Lifshitz critical points is specific to $det(\bm{g}_{tot}) = 0$ and arises from symmetry. The Lifshitz transition can be described as follows: For a slightly negative value of $det(\bm{g}_{tot})$, the conical touchings form gaps in the directions transverse to the symmetry axes. This causes the surface to have a nontrivial topology, because of the occurrence of holes surrounding the symmetry axes. The inner surface, in this case, is a surface of genus thirteen. In simple terms, it has the topology of a kind of wiffle ball. However, for a slightly positive value of $det(\bm{g}_{tot})$, the gaps instead open longitudinally along the symmetry axes. The surface transitions into a double sphere, i.e., whose genuses are both zero. The surrounding red and blue satellite regions have trivial topology (genus 0) that remains unaffected as $det(\bm{g}_{tot})$ is varied through zero.\\
\\
\onecolumngrid\
\begin{figure}[h!]\ 
\begin{center}\
\includegraphics[width=0.5\textwidth]{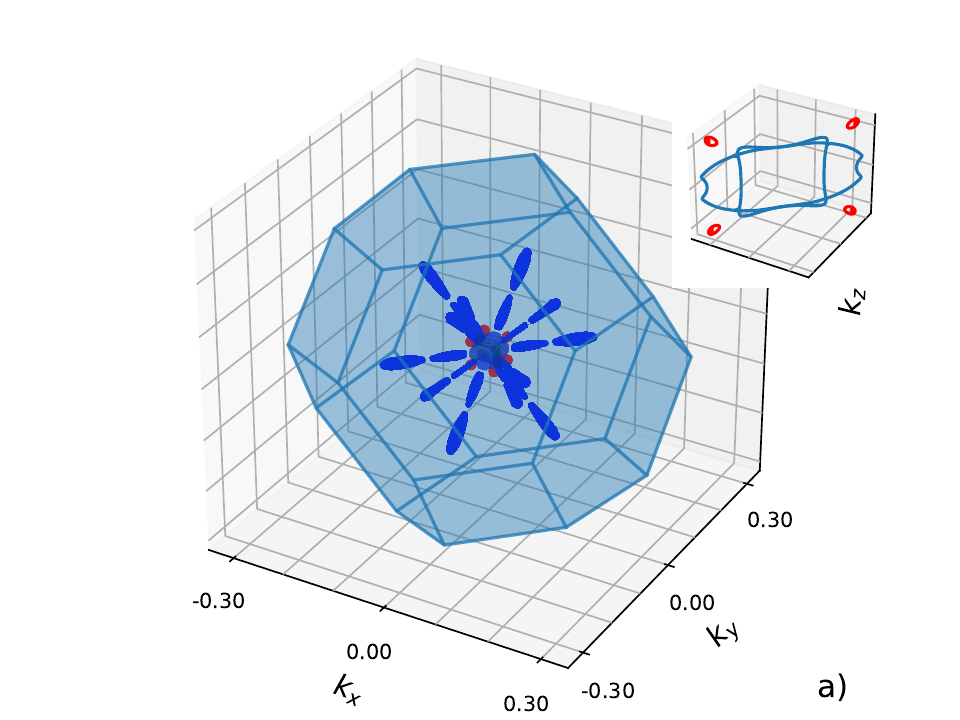}
\includegraphics[width=0.5\textwidth]{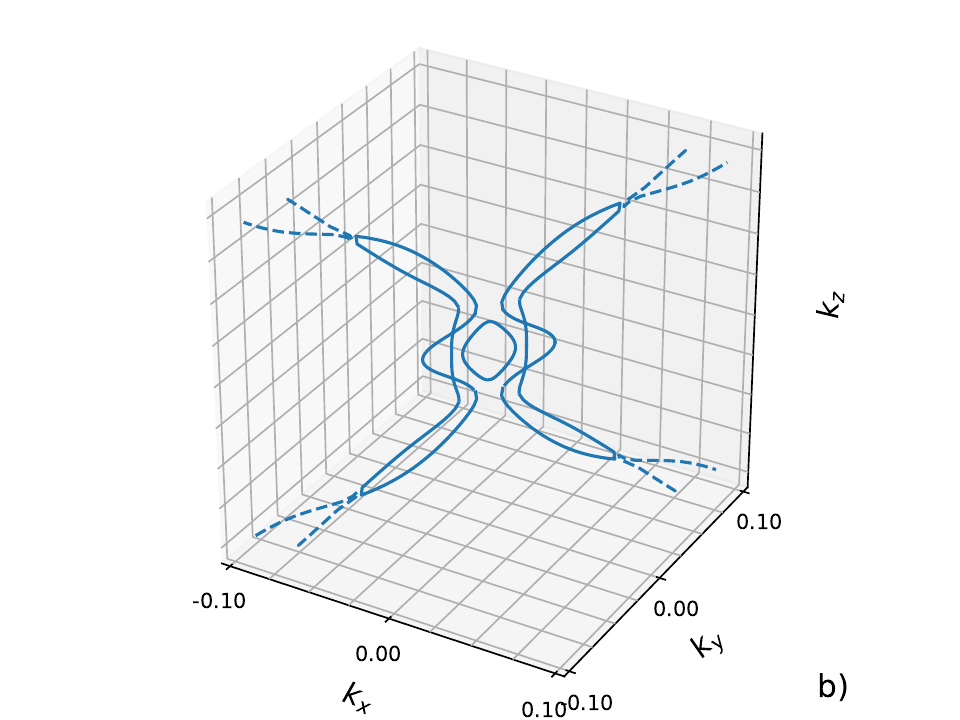}
\caption{(a) Visualization of the surfaces defined by $det(\bm{g}_{tot}) = 0$ for the first conduction band of Si. The central part is a double surface. The inset shows the cross-section of this central region and four spherical satellites in the $\bm{\Lambda}$-directions. One can clearly note the conical touchings in the $\bm{\Delta}$ and $\bm{\Lambda}$ symmetry directions. The Lifshitz transition at $det(\bm{g}_{tot})=0$ involves a change of genus of this surface from 0 to 13. Additional surfaces are present as elongated satellites in the $\bm{\Sigma}$ and $\bm{\Delta}$ directions. (b) Cross section of the $det(\bm{g}_{tot}) = 0$ Ge surface for the second conduction band. This cross section is informative because it shows four conical touching points (close to the central spherical surface) along the $\bm{\Lambda}$ directions. In this case as well, the Lifshitz transition changes the genus of the surface from 5 to 13.}\
\label{fig:gtot}
\end{center}\
\end{figure}\
\twocolumngrid\
The $det(\bm{g}_{tot})=0$ Ge surfaces of the second conduction band turn out to be topologically even more complex. The central surface is topologically a sphere. The second surface has a very rugged form. It has openings in the six $\bm{\Delta}$ directions. Figure \ref{fig:gtot}(b) shows a cross section of the entire set of surfaces in a plane containing the 111-direction. The dashed lines in the figure are additional surfaces with trivial topology, connecting into the second Brillouin zone. The points where these approach, but do not touch, the second surface are not special\\
One observes openings above and below the central surface, corresponding to holes in the second surface. One sees the cross section of the second surface in the figure as two separate regions on the left and the right, but these are fully connected in 3D. On both the left and the right one sees two intersection points, which are the cross sections of some of the conical touching points (of which there are eight overall), indicating that this surface exhibits Lifshitz criticality. $det(\bm{g}_{tot})$ also exhibits conical touchings on the second surface in the $\bm{\Delta}$ directions, but unlike for the Si case, these touchings occur only for a finite positive value of the determinant. This surface, for $1 >> det(\bm{g}_{tot}) > 0$ is one of genus 13. For a slightly negative value of $det(\bm{g}_{tot})$, the conical touchings all disconnect in the other way, causing this second surface to have genus 5.


\section{Conclusion and Outlook}
In this paper, we have provided a formalism for calculating the $g$-tensor and thus the $g$-factors, of any band at any given point in the Brillouin zone. We apply this formalism using tight-binding calculations for silicon, germanium, and gallium arsenide. The $g$-tensor has two contributions, one from the spin magnetic moment and one from the orbital magnetic moment. We provide a new derivation that links the old view, in which the orbital moment is described using the antisymmetric effective mass, to a new view which identifies this moment as originating from the Berry curvature of the energy band states, indicating orbital currents flowing inside a wavepacket made of these band states. We show that these points of view are fully consistent with each other.

We establish that, for any semiconductors with crystal point-group symmetries $O_h$ or $T_d$, bands of specific symmetry must exhibit Bloch eigenstates with maximum spin-orbital entanglement. Because of the weakness of the spin-orbit interaction in Si, the occurrence of this maximal entanglement occurs quite close to the $\bm{\Gamma}$-point for this material. This is quite far from the conduction band minimum along the $\bm{\Delta}$ direction, where this entanglement is quite small.

We hope that this work will establish a new mindset in thinking about the crystal g-factor. In no case that we have looked at is the g-factor ``boring", with just small variations from the non-relativistic value. New considerations, mixing together symmetry and topology in a novel way, determine the very interesting structure that we see in the crystal g-factors. This could be a harbinger of new effects to be looked for in various nanostructured devices. We mention only one idea, that the ``wiggle well" \cite{mcjunkin2022sige} technique can be used to selectively make states from particular parts of the crystal Brillouin zone relevant in nanostructures. This technique could be used to tease out information about, e.g., Lifshitz transitions manifested by the invariants of $\bm{g}_{tot}$. Finally, our work offers a new take on the sign of the g-factor: we see it playing the role of a new $\mathbb{Z}_2$ topological quantum number, indicating the occurrence of gap-closing quantum phase transitions of a novel character. In short, we see the g-factor as a promising subject for continuing investigations of both a practical and a fundamental nature.

\section{Acknowledgements}
We acknowledge fruitful discussions with our colleague Fabian Hassler on some aspects of group theory. We acknowledge support from the Deutsche Forschungs-Gemeinschaft (DFG, German Research Foundation) under Germany’s Excellence Strategy - Cluster of Excellence Matter and Light for Quantum Computing (ML4Q) EXC 2004/1 - 390534769.

\bibliography{bib_ref_ar}

\end{document}